   \newtheoremstyle{example}{\topsep}{\topsep}%
     {}
     {}
     {\bfseries}
     {}
     {\newline}
     {\thmname{#1}\thmnumber{ #2}\thmnote{ #3}}
   \theoremstyle{example}
   \newtheorem{example}{Example}[section]
   \theoremstyle{definition}
   \newtheorem{definition}{Definition}[section]
   \newtheorem{lemma}{Lemma}[section]
\newcommand{\ax}[1]{\mathbf{#1}}
\newcommand{\Lg}[1]{\mathbf{#1}}
\newcommand{\SDL}{\mathbf{SDL}}
\newcommand{\CPL}{\mathbf{CPL}}
\newcommand{\equaldef}{=_{\mathit{def}}}
\newcommand{\PERM}{\mathbf{P}}
\newcommand{\sPERM}{\mathbf{P_{\kern-1.5pt s}}}
\newcommand{\wPERM}{\mathbf{P_{\kern-2pt w}}}
\newcommand{\OBL}{\mathbf{O}}
\newcommand{\prop}{\mathrm{PROP}}
\newcommand{\cM}{\mathcal{M}}
\newcommand{\cN}{\mathcal{N}}
\newcommand{\cNP}{\mathcal{N}_{\PERM}}
\newcommand{\cNO}{\mathcal{N}_{\OBL}}
\newcommand{\cF}{\mathcal{F}}
\newcommand{\cC}{\mathcal{C}}
\newcommand{\cL}{\mathcal{L}}
\newcommand{\set}[1]{\left\{#1\right\}}
\newcommand{\tuple}[1]{\langle #1\rangle}
\NewDocumentCommand{ \tset }{ m O{M} }
  { ||#1||\sb{\mathcal{#2}} }
\begin{document}
\title{Is Free Choice Permission Admissible in Classical Deontic Logic?}
\author{Guido Governatori\\ 
Data61, CSIRO, Australia
\and Antonino Rotolo\\
University of Bologna, Italy}
\date{}

\maketitle

\begin{abstract}
In this paper, we explore how, and if, free choice permission (\ref{eq:FCP}) can be accepted when we consider deontic conflicts between certain types of permissions and obligations. As is well known, \ref{eq:FCP} can license, under some minimal conditions, the derivation of an indefinite number of permissions. We discuss this and other  drawbacks and present six Hilbert-style classical deontic systems admitting a guarded version of \ref{eq:FCP}. The systems that we present are not too weak from the inferential viewpoint, as far as permission is concerned, and do not commit to weakening any specific logic for obligations.
\end{abstract}

%
\section{Introduction and Background}\label{sec:Introduction}

A significant part of the literature in deontic logic revolves around the discussions of puzzles and paradoxes which show that certain logical systems are not acceptable---typically, this happens with deontic $\Lg{KD}$, i.e., Standard Deontic Logic ($\Lg{SDL}$)---or which suggest that obligations and permissions should enjoy some desirable properties.

One well-known puzzle is the the so-called Free Choice Permission paradox, which was originated by the following remark by von Wright in \cite[p. 21]{vonWright1968-VONAEI-3}:

\begin{quote}
``On an ordinary understanding of the phrase `it is permitted that', the formula  `$\PERM (p \vee q)$' seems to entail `$\PERM p \wedge \PERM q$'. If I say to somebody `you may work or relax' I normally mean that the person addressed has my permission to work and also my permission to relax. It is up to him to choose between the two alternatives.''
\end{quote}

Usually, this intuition is formalised by the following schema:

\begin{equation}\tag{\bf{FCP}}
	\label{eq:FCP}
\PERM (p \vee q) \to (\PERM p \wedge \PERM q)
\end{equation}

Many problems have been discussed in the literature around \ref{eq:FCP}: for a  comprehensive overview, discussion, and some solutions, see \cite{Hansson:PERM,Dong:2017,sep-logic-deontic}.

Three basic difficulties can be identified, among the others \cite[][p. 43]{Dong:2017}:
\begin{itemize}
	\item \textbf{Problem 1: Permission Explosion Problem} -- ``That if anything is permissible, then everything is, and thus it would also be a theorem that nothing is obligatory,'' \cite{sep-logic-deontic}, for example ``If you may order a soup, then it is not true that you ought to pay the bill'' \cite{Asher2005};
	\item \textbf{Problem 2: Closure under Logical Equivalence Problem} -- ``In its classical form \textbf{FCP} entails that classically equivalent formulas can be substituted to the scope of a permission operator. This is also implausible: It is permitted to eat an apple or not iff it is permitted to sell a house or not'';
	\item \textbf{Problem 3: Resource Sensitivity Problem} -- 	``Many deontic logics become resource-insensitive in the presence of \textbf{FCP}. They validate inferences of the form `if the patient with stomach trouble is allowed to eat one cookie then he is allowed to eat more than one'\,''.
\end{itemize}

In this paper, we focus on another basic problem: how, and if, \textbf{FCP} can be accepted when we have incompatibilities between certain varieties of permissions and prohibitions/obligations. The issue is that since Problem 1 licenses the derivation that anything is permitted provided that something is permitted, no prohibition/obligation is allowed, otherwise we get an inconsistency \cite{sep-logic-deontic}. In doing so, we will offer simple logics that take two of the three  problems above into account. 

The layout of the paper is as follows. The remainder of this section briefly comments on the three major problems mentioned above: the Permission Explosion Problem (Section \ref{sec:explosion}), the Closure under Logical Equivalence Problem (Section \ref{sec:RE}), and the Resource Sensitivity Problem (\ref{sec:sensitivity}). Section \ref{sec:purpose} illustrates the theoretical intuitions and assumptions that we adopt to analyse free choice permission. In particular, we assume the distinction between norms and obligations/permissions, and we study the role of deontic incompatibilities, the duality principle, and why free choice permission is strong permission. Section \ref{sec:related_work} reviews in some detail two related works that have direct implications for our proposal. Finally, Section \ref{sec:system} presents some minimal deontic systems, six Hilbert-style deontic systems admitting guarded variants of \ref{eq:FCP}: the systems that we present are not too weak from the inferential viewpoint, as far as permission is concerned, and do not commit to weakening any specific logic for obligations. Some conclusions end the paper. An appendix offers proofs of the formal properties of the proposed systems presented in Section \ref{sec:system}.

\subsection{Problem 1: Permission Explosion Problem}\label{sec:explosion}

One of the most acute problems springing from \ref{eq:FCP} is obtained in $\SDL$, where, if at least one obligation $\OBL p$ is true, then by necessitation and propositional logic, we get $\OBL (p \vee q)$. Since axiom $\ax{D}$ is in $\SDL$, i.e $\OBL p \to \neg \OBL \neg p$ is valid, we trivially obtain $\neg \OBL \neg (p \vee q)$, thus, assuming the Duality principle
\begin{equation}\tag{\text{\bf{Duality}}}
	\label{eq:duality}
\PERM =_{\mathit{def}} \neg \OBL \neg
\end{equation}
we derive through \ref{eq:FCP} that  $\PERM q$. Hence, $\SDL$ licenses that, if something is obligatory, then everything is permitted.

However, a careful analysis shows that this undesired result is not strictly due to $\SDL$ as such, but to adopting
any monotonic modal deontic logic \cite{Chellas1980}, i.e. any system just equipped with inference rule $\ax{RM}$:
\begin{equation}\tag{\bf{RM}}
	\label{eq:RM}
  \frac{\vdash p\to q}{\vdash \OBL p\to \OBL q}.
\end{equation}
or, alternatively with
\begin{equation}\tag{\bf{RE}}
	\label{eq:RE}
  \frac{\vdash p\equiv q}{\vdash \OBL p\equiv \OBL q}.
\end{equation}
plus the following axiom schema
\begin{equation}\tag{\bf{M}}
	\label{eq:M}
\OBL (a \wedge b) \to (\OBL a \wedge \OBL b).
\end{equation}

Indeed, assume Classical Propositional Logic ($\Lg{CPL}$), \textbf{FCP}, and \textbf{RM} for $\PERM$\footnote{I.e., 
\begin{equation}\tag{\bf{RM-P}}
	\label{eq:RM-P}
  \frac{\vdash p\to q}{\vdash \PERM p\to \PERM q}.
\end{equation}
Indeed, it is standard result that every system closed under \ref{eq:RM} for an operator is closed under the rule of the dual of the operator \cite[cf.][p. 238--239, 243]{Chellas1980}. We will use $\ax{RM}$ to refer in general to the rule $\vdash p\to q/ \vdash \Box p\to \Box q$ for any modal operator $\Box$.} and consider the following derivation:
\[
\begin{array}{lll}
1. & p \to (p\vee q) & \Lg{CPL}\\
2. & \PERM p \to \PERM (p\vee q) & 1, \ax{RM}\\
3. & \PERM p \to (\PERM p\wedge \PERM q) & 2, \ax{FCP}, \Lg{CPL}\\
4. & \PERM p\to \PERM q & 3, \Lg{CPL}\\
\end{array}
\]
In this context, it is enough if we have that $\PERM p$ is true to derive that any other permission $\PERM q$, i.e., $\PERM p \vdash \PERM q$ for any $p,q$.
%
Whenever $\ax{FCP}$ is accepted, such a problem strictly depends on the characteristic schemas and inference rules of monotonic modal logics, as the above derivation---or a simple semantic analysis---shows. Hence, permission explosion is not a problem of $\Lg{SDL}$, but of any weaker modal deontic logic which is at least closed under classical implication or which is closed under logical equivalence and allows for the distribution of $\PERM$ over implication. Notice that \ref{eq:duality} plays no substantial role. Accordingly, we can have that
$\ax{RM}$ is valid for permission, if $\PERM$ and $\OBL$ are duals and the logic for $\OBL$ is
a monotonic modal logic, or $\PERM$ is independent of $\OBL$ and $\ax{RM}$ is assumed for $\PERM$. 

In conclusion, if we want not to completely reject the intuition behind \ref{eq:FCP}, we have two non-exclusive options to be explored in order to avoid the Permission Explosion Problem:

\begin{description}
	\item[No-CPL:] abandon $\Lg{CPL}$ and adopt suitable non-classical logical connectives;
	\item[No-RM:] abandon inference rule $\ax{RM}$ (or schema $\ax{M}$) and endorse very weak modal logics (i.e., the classical ones \cite[chap. 8]{Chellas1980}).\footnote{We state in Section \ref{sec:RE} why it is convenient not to drop $\ax{RE}$.}
\end{description}

Our paper aims at exploring under what conditions \textbf{No-CPL} can be avoided by accepting at least a restricted version of \ref{eq:FCP}. Hence, it seems that \textbf{No-RM} thesis must be accepted.

\subsection{Problem 2: Closure under Logical Equivalence Problem}\label{sec:RE}

In the previous section we mentioned that $\ax{RM}$ must be weakened. Hence, we can also drop $\ax{RE}$ and keep axiom schema $\ax{M}$. This choice could look satisfactory for those who consider problematic the fact that the logic for $\PERM$ is closed under logical equivalence.

We take here another route. Incidentally, one can argue that the implausibility of ``It is permitted to eat an apple or not iff it is permitted to sell a house or not'' does not depend on \textbf{RE}, but rather on the fact that ``It is permitted to eat an apple or not'' is $\PERM \top$, which looks quite odd. However, besides this problem---which would lead us to commit to specific philosophical views---dropping $\ax{RE}$ has in general two controversial technical side effects:
\begin{itemize}
	\item it rejects standard semantics for modal logics, since the class of all neighbourhood frames validate \textbf{RE}: \cite{Chellas1980} argued in fact that classical systems (i.e., containing \textbf{RE} but not \textbf{RM}) are the minimal modal logics;
	\item it fails to make, for instance, $\OBL p$ and $\PERM \neg p$ logically incompatible under the Duality Principle (while $\OBL p$ and $\neg \OBL p$ of course are);  similarly,  $\OBL \neg p$ and $\OBL \neg \neg p$, or  $\OBL (p \vee q)$ and $\OBL (\neg p \wedge \neg q)$, are not incompatible too (while they of course should be).
\end{itemize}

In conclusion,  we standardly assume that $\ax{RE}$ holds both for permissions and obligations, which means that any logic for free choice permission must be a \emph{classical system of deontic logic} in \cite{Chellas1980}'s sense, i.e., any modal deontic logic closed under logical equivalence and not under logical consequence.

\subsection{Problem 3: Resource Sensitivity Problem}\label{sec:sensitivity}
It has been noted \cite{Lokhorst} that from ``You may eat an apple or a pear'', one can infer ``You may eat an apple and that You may eat a pear'', but not ``You may eat an apple and a pear'' \cite[][p. 2]{barker:2010:permission}.

We simply observe that the systems proposed in Section \ref{sec:system} do not license in general the inference above. However, a thoughtful treatment of this problem---the Resource Sensitivity Problem---goes beyond the scope of this paper. In fact,
it has been widely discussed in the literature that it is strictly related to considerations from action theory, which have often found solutions shifting from \textbf{CPL} to non-classical logics such as the substructural ones \cite[see, among others,][]{barker:2010:permission,RoyDeon,Dong:2017}.

In conclusion, we do not commit here to find any suitable solution to such a problem.

\section{Three Basic Intuitions}\label{sec:purpose}

We are going to present some deontic systems 
that accommodate restricted variants of \ref{eq:FCP}. This is done under some minimal philosophical assumptions, which can in principle be compatible with several deontic theories. Of course, our approach is not neutral. In this section, we illustrate our fundamental intuitions and assumptions.

\subsection{The Distinction between Norms and Obligations}

We assume in the background a conceptual distinction between norms, on one side, and obligations and permissions, on the other side. The general idea of norms is that they describe conditions under which some behaviours are deemed as `legal'. In the simplest case, a behaviour can be qualified by an obligation (or a prohibition, or a
permission), but often norms additionally specify the consequences
of not complying with them, and what sanctions follow from violations and
whether such sanctions compensate for the violations. The scintilla for this idea is the very influential contribution \cite{alchourron71normative}, which is complementary to the (modal) logic-based approaches to deontic logic. The key feature of this approach is that norms are dyadic constructs connecting applicability conditions to a deontic consequence. A large number of such pairs would constitute an interconnected system
called a {\em normative system} \cite[for more recent proposals in this direction, see][]{Makinson:1999,makinson-torre:2003,ajl:ctd,deon:2016}.

To be clear, \emph{this paper does not present any logic of norms, but our proposal for a logic of obligations and permissions---with restricted variants of \textbf{FCP}---can be
better understood if one keeps in mind some intuitions about how norms should logically behave} and about the relation between the logic of norms and deontic logic. In particular, our assumptions are:
\begin{itemize}
\item obligations and permissions exist because norms generate them when applicable;
\item once obligations and permissions are generated from norms---which requires us to reason about norms---we can still perform some reasoning with the resulting obligations and permissions---this is the task of \emph{deontic logic in a strict sense}, i.e., the logic of obligations and permissions;
\item norms can be in conflict---without being inconsistent--- but this does not hold for obligations and permissions.
\end{itemize}

Hence, we distinguish two levels of analysis: a \emph{norm-logic level} and a resulting \emph{deontic-logic level}. This paper only technically deals with the second level of analysis.

Assume for example that we have two norms $n_1: p \Rightarrow \OBL \neg q$ and $n_2: p \Rightarrow \PERM  q$, where $\Rightarrow$ is any if-then suitable logical relation connecting applicability conditions of norms and their deontic effects. We can indeed have them---for example, in a legal system---but the point is what obligations/permissions we can obtain from them. A rather standard assumption is that in order to correctly derive deontic conclusions we need to solve the conflict between $n_1$ and $n_2$. Specifically, our general view is prudent (or skeptical, as one says in non-monotonic logics), because, unless we know how to solve the conflict (typically, by establishing that $n_1$ is stronger than $n_2$ or vice versa), we do not know if $\OBL \neg q$ or $\PERM q$ holds. Since we do not accept that both can hold, it is pointless to consider at the deontic level that $\OBL \neg q$ \emph{and} $\PERM q$ are true---while any logic of norms can have both $n_1$ and $n_2$.

In conclusion, we impose deontic consistency at the deontic-logic level, i.e., $\OBL p \wedge \OBL \neg p \to \bot$.
%

\subsection{Deontic Incompatibilities, Duality, and \ref{eq:FCP}}

With the above said, the issue is whether \ref{eq:FCP} is an appropriate principle to adopt for normative reasoning. Our view is that this principle in general is not, even when Problem 1 and 2 above are solved. We provide below a simple counterexample to it, which considers the interplay between free choice permissions and prohibitions.

\begin{example}\label{ex:FCP-counterexample} When you have dinner with guests the etiquette allows you to eat or to have a conversation with your fellow guests. However, it is forbidden to speak while eating. 
\end{example}
The full representation of the example is that each choice is
permitted when one refrains from exercising the other one. In a situation
when one eats, there is the prohibition to speak, while when one
speaks, there is the prohibition to eat. Hence, it means that we can 
detach any single permission only if the content of such permission is not 
forbidden. Given that Example~\ref{ex:FCP-counterexample} provides a 
counterexample to \ref{eq:FCP}, the question is whether we want to derive 
the individual permissions when one of the two disjuncts holds and we already 
satisfy the disjunctive permission. The reason is that the individual 
permissions, each on its own, can trigger other obligations or permissions. 
The following example illustrates this scenario.  

\begin{example}\label{ex:return}
  Suppose a shop has the following policy for clothes bought online. If
  the size of an item is not a perfect fit, then the customer is entitled 
  to either exchange the item for free or to keep the item and receive a 10\
  refund. However, customers electing to keep the item are not entitled to the 
  refund, and customers opting for the refund are not entitled to exchange the 
  item for free.  Furthermore, customers who elect to exchange the item (when 
  entitled to do so) have to return it with the original package.  
\end{example}
The example can be formalised as follows:
\begin{gather*}
  \mathit{online}\wedge\neg\mathit{fit}\rightarrow \PERM(\mathit{exchange} \vee \mathit{refund})\\
  \mathit{exchange} \rightarrow \OBL\neg\mathit{refund}\\
  \mathit{refund} \rightarrow \OBL\neg\mathit{exchange}\\
  \PERM\mathit{exchange}\wedge\mathit{exchange} \rightarrow \OBL\mathit{original}
\end{gather*}
Suppose that a customer elects to exchange an item bought online 
that is not a perfect fit instead of asking for the refund. 
Intuitively, given that we cannot derive that exchanging is not forbidden 
($\OBL\neg\mathit{exchange}$) at least the weak permission of exchanging 
the item should hold. However, in a deontic logic without \ref{eq:FCP} 
(or a restricted version of it) we are not able to derive the permission, 
and  then we are not able to derive other obligations or permissions 
depending on it: in the example, the obligation to return the item with 
the original package. 

We will return in Section \ref{sec:FCP} to the logical import of the above 
scenarios in a classical system of deontic logic. For the moment, taking stock 
of the examples we just notice that \ref{eq:FCP} could be reformulated as follows:
\begin{equation}
  \label{eq:fcp-res1}
  (\PERM(p\vee q) \wedge
  (\neg\OBL\neg p \wedge \neg\OBL\neg q))
  \to
  (\PERM p \wedge \PERM q).
\end{equation}
However, assuming \textbf{Duality}, $\neg\OBL\neg p$ is equivalent to
$\PERM p$, thus \eqref{eq:fcp-res1} reduces to
\begin{equation}
  \label{eq:fcp-taut}
  (\PERM(p\vee q) \wedge
  (\PERM p \wedge \PERM q))
  \to
  (\PERM p \wedge \PERM q).
\end{equation}
\eqref{eq:fcp-taut} is a propositional tautology. Thus, \eqref{eq:fcp-res1}
does not extend the expressive power of the logic unless one assumes a logic
where obligation and permission are not the duals.

\subsection{Strong Permission, Classical Systems, and \ref{eq:FCP}}\label{sec:FCP}

When permission is no longer the dual of obligation, we enter the territory of strong permission \cite{vonwright:1963,AB:1981,alchourron-bulygin:1984}\footnote{Besides von Wright's theory \cite{vonwright:1963}, there is another sense in the literature of strong permission \cite{Kamp:1973}. We will briefly return on this in Section \ref{sec:weakest}.}. As is well-known, while it is \emph{sufficient} to show that $\OBL\neg p$ is \emph{not} the case to argue that $p$ is weakly permitted, this does not hold for strong permission, for which the normative system \emph{explicitly} says that there exists at least one norm permitting $p$  \cite[][p. 353--355]{alchourron-bulygin:1984}.

In order to keep track of these two cases at the deontic-logic level, we can standardly distinguish in the deontic language two permission operators, $\wPERM$ for weak permission (such that $\wPERM p \equaldef \neg \OBL \neg p$) and $\sPERM$ for strong permission (where \ref{eq:duality} does not hold).

What is the minimal logic of strong permission at the deontic level in which 
some reasonable version of free choice permission can be accepted? 

We mentioned that \ref{eq:RM} must be rejected. In fact, besides the Permission Explosion Problem, one may also argue that it is reasonable not to derive $\sPERM (p\vee q)$ from any $\sPERM p$ because we could have in the background that the normative system consists just of an explicit norm $a \Rightarrow \sPERM p$. If we have that, in presence of some version of free choice permission, you may also detach $\sPERM q$, which is against the above-mentioned intuition that the strong permission should follow from explicit norms, or from combinations of them in normative systems where all disjuncts are explicitly considered \cite[see, e.g., the discussion in][p. 354--355]{alchourron-bulygin:1984}.

Second, as said above, deontic consistency should be ensured:
\begin{gather}
\OBL p \wedge \sPERM \neg p \to \bot \tag{$\mathbf{D_s}$} \label{eq:Ds}\\
\OBL p \wedge \OBL \neg p \to \bot \tag{$\mathbf{D_w}$} \label{eq:Dw}
\end{gather}
Notice that  \ref{eq:Dw} is the standard $\ax{D}$ axiom of Standard Deontic Logic establishing the so called
\emph{external consistency} of obligations that, in turn, implies consistency among obligations and (weak) permissions.
From \ref{eq:Ds} we obtain, as expected, that strong permission  entails weak permission \cite[see, e.g.,][p. 354]{alchourron-bulygin:1984}, but not the other way around:
\[
\sPERM p \to \wPERM p.
\]
This is reasonable because the fact that at the norm-level we derive that $p$ is permitted using an explicit permissive norm $n$ means that no prohibitive norm $n'$ (forbidding $p$) successfully applies or prevails over $n$.

What about free choice permission? 
Coupling Assumptions 1 and 2 with the distinction between weak and strong permission allows us to identify a guarded variant of \ref{eq:FCP} for strong permission, consisting of two schemata:
\begin{gather}
(\sPERM (p \vee q) \wedge \OBL \neg p) \to \sPERM q \tag{$\ax{AFCP_{\OBL}}$}\label{eq:O}\\
(\sPERM (p \vee q) \wedge \wPERM p \wedge \wPERM q) \to (\sPERM p \wedge \sPERM q) \tag{$\ax{AFCP_{\PERM}}$}\label{eq:P}
\end{gather}
These schemata take stock of what we said: you can detach from a disjunctive strong permission any single strong permission only if this last is weakly permitted.

The idea of the combination of the two axioms is that from repeated applications of \ref{eq:O} and from a
disjunctive permission, we can obtain
the maximal sub-disjunction such that no element is forbidden, and
then, the application of the \ref{eq:P} allows us to derive the individual strong permissions that are not
forbidden.  Notice that we cannot assume the following formula as the axiom for free choice permission.
\[
 \sPERM\bigg(\bigvee_{i=1}^{n}p_i\bigg)\wedge\bigg(\bigwedge_{j=1}^{m<n} \OBL\neg p_j\bigg)\to \bigwedge_{k=m+1}^n \sPERM p_k
\]
The problem is that we do not know in advance how many elements of the disjunctive permission are (individually) forbidden. Consider for example, a theory consisting of the following formulas:
\[
 \sPERM(p \vee q\vee r\vee s \vee t) \qquad \OBL\neg p \qquad \OBL\neg q  \qquad \OBL\neg r
\]
Here, one could use the conjunction $\OBL\neg p\wedge \OBL\neg q$ to obtain $\sPERM r$, $\sPERM s$ and $\sPERM t$, but then we have a contradiction from $\sPERM r$ and $\OBL\neg r$ (from axiom \ref{eq:Ds}). Notice, that in
general, we are not able to use \ref{eq:O} to detach a single (strong) permission, but a disjunction corresponding to the ``remainder'' of the disjunction, that is, in the case above, $\sPERM(s\vee t)$. Then, we can use the \ref{eq:P} to ``lift'' the remaining elements from weak permissions to strong permissions. The only case when we can obtain an individual strong permission from a permissive disjunction is when the remainder is a singleton;  but this means, that all the other elements of the permissive disjunction were forbidden. This further means that a disjunctive strong permission holds if at least one of its elements can be legally exercised. Going back to the example, if one extends the theory with $\OBL\neg s$, then we can derive $\sPERM t$. 

Let us consider again the situation described in Example~\ref{ex:FCP-counterexample}. The scenario can be formalised as follows (where $e$ and $s$ stand, respectively for ``to eat'' and ``to speak''):
\begin{gather*}
  \sPERM(e \vee s)\\
  s \rightarrow \OBL\neg e\\
  e \rightarrow \OBL\neg s
\end{gather*}
In a logic endorsing the unrestricted version of free choice permission, we have $\sPERM e$ and $\sPERM s$. This means that as soon as one exercises one of the choices, we get that the other choice is at the same time permitted and forbidden, a situation that is either paradoxical or contradictory. Thus, the only way to avoid
this kind of conflict is to refrain from exercising any of the two choices. However, this means that one is
not really free to choose between the two options. Accordingly, either one has to adopt a restricted version of the free choice permission or abandon it. Notice, that axiom \ref{eq:O} allows us to conclude that given $e$, $s$ is forbidden ($\OBL\neg s$), and thus that $e$ is permitted ($\sPERM e$); similarly, one gets $\sPERM s$ from $s$, which implies $\OBL\neg e$.
Similarly, for Example~\ref{ex:return} when we formalise it using strong permission $\sPERM$ instead of $\PERM$, Axiom \ref{eq:O} allows us to derive $\sPERM\mathit{exchange}$ from which we can conclude $\OBL\mathit{original}$.

Consider $\ax{AFCP_{\PERM}}$. One may argue why, in symmetry with $\ax{AFCP_{\OBL}}$, we cannot rather have
\begin{equation}\tag{$\ax{AFCP2_{\PERM}}$}
	\label{eq:P2}
(\sPERM (p \vee q) \wedge \wPERM p ) \to \sPERM p
\end{equation}
Technically, it is obvious that $\ax{AFCP2_{\PERM}}$ implies $\ax{AFCP_{\PERM}}$ but not the other way around, so both options are available. The variant $\ax{AFCP_{\PERM}}$ is more prudent in that it licenses the detachment of an individual strong permission \emph{only if} the normative system explicitly deals with that specific disjunct, while the second allows for the derivation in a slightly more relaxed way. So, if one wants to strictly reframe the structure of standard $\ax{FCP}$ in a guarded version but does not want $\ax{AFCP2_{\PERM}}$, then $\ax{AFCP_{\PERM}}$ is the right option.

We should notice that the above schemata for free choice permission do not necessarily require the technical idea of deontic consistency, unless we assume---but we don't---that obligation implies strong permission, and despite the fact that the consistency problem can occur if we endorse \ref{eq:Ds}---as we do--- and so that strong permission implies weak permission.
In fact, if we do not validate \ref{eq:RM}, we would need anyway to model the following scenario
\begin{equation}\label{eq:case}
(\sPERM (p \vee q) \wedge \OBL (\neg p \wedge r)) \to \sPERM q
\end{equation}
as a true instance of our intuition, despite the fact that $\OBL (\neg p \wedge r)$ and $\sPERM p$ are not inconsistent.
%
One may rather argue that any state exercising the permission $\sPERM p$ --i.e., a state where $p$ holds-- does not comply with $\OBL (\neg p \wedge r)$. If so, the condition that detached permissions are compliant with obligations is a \emph{rational requirement for free choice permission}, which is not technically needed in classical modal systems for ensuring standard logical consistency between deontic statements.\footnote{Indeed, it is a trivial result in classical systems that the inference of $\OBL \neg p$ from $\OBL (\neg p \wedge r)$ is not in general valid. Semantically, it is also immediate to build a neighbourhood model which falsifies that inference or which, in a similar perspective, admits the truth of both $\OBL (\neg p \wedge r)$ and $\sPERM p$. (This holds if $\sPERM$ is treated as an independent $\Box$ operator with no further conditions.)}

Therefore, if one wants to consider (\ref{eq:case}) as an instance of free choice permission in logics that do not satisfy $\ax{RM}$, we need to replace \ref{eq:O} and \ref{eq:P} with the following inference rules:
\begin{equation}\tag{$\mathbf{IFCP_{\OBL}}$}
	\label{eq:RFCPO}
  \frac{\sPERM (p \vee q) \wedge \OBL r \qquad \vdash r \to \neg p}{\sPERM q}
\end{equation}
\begin{equation}\tag{$\mathbf{IFCP_{\PERM}}$}
	\label{eq:RFCPP}
  \frac{\sPERM (p \vee q) \wedge (\wPERM r \wedge \wPERM s)\qquad \vdash r \to p \qquad \vdash s \to q}{\sPERM p \wedge \sPERM q}.
\end{equation}
Of course, the same remark we made for $\ax{AFCP_{\PERM}}$ and $\ax{AFCP2_{\PERM}}$ holds here, too, so we may have the following alternative:
\begin{equation}\tag{$\mathbf{IFCP2_{\PERM}}$}
	\label{eq:RFCPP2}
  \frac{\sPERM (p \vee q) \wedge \wPERM r \qquad \vdash r \to p }{\sPERM p}.
\end{equation}

In the discussion leading to the formulation of the inference rules above, we provided the intuition and an example for \ref{eq:RFCPO}. Let us now examine a few interesting cases for \ref{eq:RFCPP} and \ref{eq:RFCPP2}.
For the first case, we consider the following instance
\[
 \sPERM(p\vee q), \quad \wPERM(p\wedge r)
\]
In a logic with $\ax{RM}$ we can derive $\wPERM p$ from $\wPERM(p\wedge r)$, since $(p\wedge r)\to p$ is a tautology. The inference rules \ref{eq:RFCPP2} and \ref{eq:RFCPP} allow us to replicate this type of reasoning without being forced to derive $\wPERM p$, thus we can apply \ref{eq:RFCPP2} to the instance above to obtain $\sPERM p$.

The second situation  is given by the following theory:
\[
  \sPERM(p \vee q\vee r), \quad \wPERM p, \quad \wPERM q, \quad \OBL\neg r
\]
Since $p\to p \vee r$ and $q\to q\vee r$ are tautologies, the theory implies, by applications of \ref{eq:RFCPP2}, $\sPERM p$, $\sPERM q$, $\sPERM(p\vee r)$ and $\sPERM(q\vee r)$.

It is easy to verify that $\ax{RM}$ in conjunction with any of the axiom \ref{eq:O}, \ref{eq:P} and \ref{eq:P2} makes the corresponding inference rules admissible. But, as we have seen in Section~\ref{sec:FCP}, the combination of \ref{eq:FCP} and $\ax{RM}$ leads to the Permission Explosion Problem.  This is not the case when we replace \ref{eq:FCP} with \ref{eq:P} (or \ref{eq:P2} or the corresponding inference rules), since the guarded version of free choice permission allows us to derive an individual strong permission from a disjunctive (strong) permission containing the individual element only when we have that the individual permission is also weakly permitted. In general, the axiom and
inference schemata we have proposed do not suffer from the permission explosion, though  the resulting logics
have a ``controlled'' permission explosion in the sense that every weak permission is lifted to a strong permission in case a tautology is explicitly strongly permitted, as the following derivation shows:
\[
\begin{array}{lll}
  1. & \sPERM(p\vee\neg p) & Hyp.\\
  2. & \wPERM q  & Hyp.\\
  3. & \sPERM(q\vee p\vee\neg p) & 1,\ \ax{RE}\\
  4. & \sPERM q  & 1,2,3,\ \ax{AFCP_P}
\end{array}
\]
The consequence of the controlled permission explosion is that whenever the logic for the strong permission is a normal modality, the notion of strong permission collapses to that of weak permission.\todo{\tiny vogliamo dire qualcosa sulla concezione di Alchurron e Bulygin che, logicamente, c'e' un solo tipo di permesso e la differenza e' nell'uso?
NINO: secondo me è meglio di no perche' noi assumiamo la differenza nelle modalita'. Non  mi è chiara la ragione che lega le condizioni di collasso discusse qui e quanto detto da Alchourron e Bulygin. O meglio, la cosa e' chiara---semanticamente normal modality significa perdere la distinzione tra norme (una sola relazione di accessibilita'), ma FCP è irrilevante, cio' che contna è che siamo in un sistema normale. Io eviterei casini, a meno che tu abbia le idee piu' chiare di me.\\}
To avoid this issue one could either impose the axiom $\neg\sPERM\top$\todo{Inserire riferimenti? discussione?} or reject $\ax{RE}$; however, the later seems to be more problematic (see the discussion in Section~\ref{sec:RE} or part of the discussion in Section~\ref{sub:asher-bonevac} below).

\section{Two Related Works}\label{sec:related_work}
In this section we review in some detail two related works that have directly implications with respect to our proposal \cite{Asher2005,anglberger_gratzl_roy_2015}. In fact, even though they have a different philosophical backgrounds, they propose \emph{simple non-normal axiomatisations for obligation and permission}---as we do---which avoid, e.g., Problem 1 and which are based on the concept of free choice permission as strong permission or, anyway, as a type of permission without \ref{eq:duality}.


\subsection{Asher and Bonevac's Analysis}
\label{sub:asher-bonevac}

Asher and Bonevac \cite{Asher2005}'s analysis presents a deontic logic based on  Anderson-Kanger reduction of obligations and permissions. Assume $S$ denotes as usual Sanction. Then, $\OBL a =_{\mathit{def}} \Box (a \to S)$---where $\Box$ is an $\Lg{S5}$ operator---is dropped by introducing a suitable and weak conditional logic for $>$, such that
\begin{gather*}
\OBL a\equaldef (\neg a > S) \qquad \qquad \sPERM a \equaldef (a > \neg S)
\end{gather*}
Obligations and permissions are defined through the concept of sanction, but \ref{eq:duality} no longer holds, since Asher and Bonevac \cite{Asher2005} argue that free choice permission is strong permission.

The logic for $>$ is not closed under logical equivalence in conditional antecedents, so the resulting $\sPERM$ and $\OBL$ are not closed under logical equivalence. The overall system is non-monotonic and can naturally handle cases such as the ones expressed in \ref{eq:O}.

So, we share with \cite{Asher2005} important assumptions. However, Asher and Bonevac's proposal suffers from some drawbacks that we consider difficult to accept. Besides the fact that the closure under logical equivalence does not hold (see Section \ref{sec:RE}), consider the following scenario \cite[p. 311]{Asher2005}:

\begin{example}[(Soup-Eggroll scenario \cite{Asher2005})]\label{ex:soup-eggroll}
Assume that $OK \equaldef \neg S$.
\begin{quote}
``Suppose we go to the Chinese restaurant. There it’s part of the context that while you may have soup or eggroll, you can’t have both. Thus,
\[
\neg (A > \mathit{OK}) \vee \neg (B > \mathit{OK})
\]
holds. But while we now can derive defeasibly that you may have, e.g., an eggroll, you can't have both eggroll and soup, at least without paying extra.''
\end{quote}
\end{example}

The Soup-Eggroll scenario, as analysed by \cite{Asher2005}, is debatable. Indeed, the fact that you can't have both soup and eggroll means that having both is forbidden. The point is that a prohibition amounts here to the fact that both strong permissions are false rather that having
$\OBL \neg (\mathit{soup} \wedge \mathit{eggroll})$, namely,
$\OBL (\neg \mathit{soup} \vee \neg \mathit{eggroll})$.
This is due to the fact that, in the example, prohibition is the negation of strong permission and not the obligation of the opposite.

Consider the following example. In some card games, it is obligatory not to play a trump card when one is the first to play; this can be expressed as when one is the first to play it is forbidden to play a trump card. Intuitively those two statements seem to be equivalent, but their formal representation, namely
\[
 (\neg\neg \mathit{trump} > S)
 \qquad
 \neg(\mathit{trump} > \neg S)
\]
are not. Notice, also, that, since the first statement contains a negation and the logic for $>$ is not closed
under logical equivalence, the proper translation is with the double negation, and as a consequence,  $(\neg\neg a > S)$ and $(a > S)$ are not equivalent; while this might be acceptable in deontic logics based on non-involutive multi-valued logics, it seems counterintuitive when the underlying logic is the bivalent classical propositional logic.

%

In conclusion, while \cite{Asher2005} has the great merit of identifying fundamental intuitions for free choice permission, it fails to frame those intuitions in a convincing general theory of obligations.

\subsection{Open Reading of Permission and Obligation as Weakest Permission}\label{sec:weakest}

A more recent proposal discussing free choice permission is due to \cite{anglberger_gratzl_roy_2015}, though the philosophical background of this work is significantly different from ours.

First of all, \cite{anglberger_gratzl_roy_2015} works on the concept of open reading of permissions \cite{Lewis:1979,Broersen:2004}. Consider that ``it is permitted to board the plane''---$\PERM p$---and assume that $p$ is an action type, i.e., something saying that ``there are many, mutually exclusive action tokens of that type''. Hence, there ``might be many ways to board a plane. There might be more than one gate to go through, there might be several times within a fixed period when one can proceed, etc.'', i.e., there are many action tokens \cite[][p. 808]{anglberger_gratzl_roy_2015}. How to read $\PERM p$ \cite{Lewis:1979}? The authors adopt the so-called open reading, according to which at least one token of type $p$ (but possibly not all) is acceptable according to the normative system.\footnote{Permission defined through the open reading is sometimes called strong permission \cite{Kamp:1973}, which is different from \cite{vonwright:1963}'s notion \cite[see][p. 808, fn. 1]{anglberger_gratzl_roy_2015}. We can ignore this issue here, by generically assuming that strong permission is just not the dual of obligation.}

Hence, rational obligations and permissions should be seen, respectively, as giving \emph{necessary} and \emph{sufficient} conditions for rational agents (typically, in game-theoretic settings) \cite[][sec. 2.3]{anglberger_gratzl_roy_2015}. An action type $p$ is obligatory whenever it is exactly the normatively ideal action type: \emph{obligation as weakest permission}. This means that ``playing any action type that rules out being rational is forbidden. This is not the case in the logic of obligations as weakest permissions. There the unique obligation bearing on the players is to play a rational strategy'' \cite[][p. 17]{Dong:2017}.

We believe that this contribution may have a general import on the debate on permission, which goes beyond the philosophical discussion on the open reading.

Technically, the modal system of \cite{anglberger_gratzl_roy_2015} is classical, too, in \cite{Chellas1980}'s sense, and $\PERM$ is not the dual of $\OBL$, but an independent $\Box$-operator.

Conceptually, it is interesting in general to explore a deontic logic that allows for deriving a unique obligation \cite[][p. 817]{anglberger_gratzl_roy_2015}. If we assume the distinction between norms and obligations, then this means that, given a normative system $\mathbf{N}$ whose norms prescribe $a_1 , \dots , a_n$, then we can only have in the deontic logic $\OBL (a_1 \wedge \dots \wedge a_n)$ but not, for example, any obligation $\OBL a_k$ ($1\leq k \leq n$). This means two things: (a) $\ax{M}: \OBL(a \wedge b) \to \OBL a \wedge \OBL b$ is not valid thus leading to rejecting $\ax{RM}$, (b) we exclude the possibility of having any norm supporting some other conclusion, which is made applicable by one single obligation.

As for point (b), while in several cases it is not harmful to only derive the entire obligative conjunctive content of a normative system---this is not a problem for checking compliance, for instance---there are cases where one may need to logically speak of single obligations. Consider a normative system, which states that some obligations are conjunctively typical only of a certain type of entities---as many legal systems do. For instance, assume that $\OBL (a_1 \wedge \dots \wedge a_n)$ holds only for specific commercial entities, such as corporations. This means that
\[
\OBL (a_1 \wedge \dots \wedge a_n) \equiv \mathit{Corporation}
\]
However, this cannot exclude that a subset of those obligations (e.g., $a_1$ and $a_2$) implies that your company is a partnership:
\[
\OBL (a_1 \wedge a_2) \to \mathit{Partnership}.
\]
That this is not admitted in a deontic logic looks to us too restrictive if we go beyond the domain of the open reading of permission.

\section{Six Minimal Deontic Axiomatisations with Guarded Free Choice Permission}\label{sec:system}

Finally, we present some minimal deontic systems, six Hilbert-style deontic systems admitting a guarded version of \ref{eq:FCP}. The systems that we present are not too weak from the inferential viewpoint, as far as permission is concerned, and do not commit to weakening any specific logic for obligations.

\subsection{Language, Axioms and Inference Rules}\label{sec:axioms}

The modal language and the concept of well formed formula are defined as usual \cite[see][]{Chellas1980,Blackburn2001}. We just recall that we have three modal operators, two $\Box$ operators, $\OBL$ for obligations and $\sPERM$ for strong permissions, and $\wPERM$ for weak permission. As usual, we assume $\wPERM$ to be an abbreviation for $\neg \OBL \neg$.

For convenience, let us synoptically recall below all relevant schemata and inference rules, where $\Box \in \set{\OBL, \sPERM}$.

\paragraph{Inference Rules:}

	\begin{description}
		\item[$\ax{RE}:=$] $ \vdash A \equiv B ~ \Rightarrow ~ \vdash \Box A \leftrightarrow \Box B$
		\item[$\ax{RM}:=$]$\vdash A \to B ~ \Rightarrow ~ \vdash \Box A \to \Box B$
		\item[$\ax{IFCP_{\OBL}}:=$] $\sPERM (p \vee q) \wedge \OBL r$ and $\vdash r \to \neg p ~ \Rightarrow ~ \sPERM q$
		\item[$\ax{IFCP_{\PERM}}:=$] $\sPERM (p \vee q) \wedge (\wPERM r \wedge \wPERM s)$, $\vdash r \to p$ and $\vdash s \to q ~ \Rightarrow ~ \sPERM p \wedge \sPERM q$
		\item[$\ax{IFCP2_{\PERM}}:=$] $\sPERM (p \vee q) \wedge \wPERM r$, $\vdash r \to p ~ \Rightarrow ~ \sPERM p$.
	\end{description}

\paragraph{Schemata:}
	\begin{description}
		\item[$\ax{M}:=$] $\Box(p \wedge q) \to (\Box p \wedge \Box q)$
		\item[$\ax{AFCP_{\OBL}}:=$] $(\sPERM (p \vee q) \wedge \OBL \neg p) \to \sPERM q$
		\item[$\ax{AFCP_{\PERM}}:=$] $(\sPERM (p \vee q) \wedge \wPERM p \wedge \wPERM q) \to (\sPERM p \wedge \sPERM q)$
		\item[$\ax{AFCP2_{\PERM}}:=$] $(\sPERM (p \vee q) \wedge \wPERM p) \to \sPERM p$
		\item[$\ax{D_s}:=$] $\OBL p \wedge \sPERM \neg p \to \bot $
		\item[$\ax{D_w}:=$] $\OBL p \wedge \wPERM \neg p \to \bot$
		\item[$\ax{P_sP_w}:=$] $\sPERM p \to \wPERM p$.
	\end{description}


Given the discussion of Section \ref{sec:purpose}, we can identify some deontic systems, as specified in Table \ref{table:systems}. Notice that we consider also systems $\Lg{FCP3}$ and $\Lg{FCP6}$, which are  monotonic, so they contain $\ax{RM}$. Strictly speaking, this is the limit which we cannot trespass, since we have three restricted forms of Permission Explosion. We will return on this in the concluding section of the paper.

{\footnotesize
\begin{table}
	\begin{tabular}{| l | c | r |}
	\hline
	Deontic System  &  Properties  & Derivable \\
	\hline
	\hline
	$\Lg{E} := \ax{RE}$  &   &  \\
	\hline
		$\Lg{Min} := \ax{RE}\oplus \ax{D_s} \oplus \ax{D_w}$  &   &  $\ax{P_sP_w}$\\
		\hline
	$\Lg{FCP_1} := \ax{Min}\oplus \ax{IFCP_{\OBL}} \oplus \ax{IFCP_{\PERM}}$  &   & $\ax{P_sP_w}$ \\
   &   &  $\ax{AFCP_{\OBL}}$, $\ax{AFCP_{\PERM}}$\\
	\hline
	$\Lg{FCP_2} := \ax{Min}\oplus \ax{AFCP_{\OBL}} \oplus \ax{AFCP_{\PERM}}$  &  $\Lg{FCP_2}\subset \Lg{FCP_1}$ & $\ax{P_sP_w}$ \\
	\hline
	$\Lg{FCP_3} := \Lg{FCP_2}\oplus\ax{M}$  &  $\Lg{FCP_1}\subset \Lg{FCP_3}$ & $\ax{P_sP_w}$ \\
	  &  & $\ax{IFCP_{\OBL}}, \ax{IFCP_{\PERM}}$ \\
  	\hline
  	$\Lg{FCP_4} := \Lg{Min}\oplus\ax{AFCP_{\OBL}}\oplus\ax{AFCP2_{\PERM}}$  &  $\Lg{FCP_2}\subset \Lg{FCP_4}$ & $\ax{P_sP_w}$, $\ax{AFCP_{\PERM}}$ \\
  	\hline
  	$\Lg{FCP_5} := \Lg{Min}\oplus\ax{IFCP_{\OBL}}\oplus\ax{IFCP2_{\PERM}}$  &  $\Lg{FCP_1}\subset \Lg{FCP_5}$ & $\ax{P_sP_w}$, $\ax{IFCP_{\PERM}}$ \\
	& $\Lg{FCP_4}\subset\Lg{FCP_5}$ & $\ax{AFCP_{\OBL}}$, $\ax{AFCP_{\PERM}}$\\
	& & $\ax{AFCP2_{\PERM}}$\\
	\hline
  	$\Lg{FCP_6} := \Lg{FCP_4}\oplus\ax{M}$  &  $\Lg{FCP_3}\subset \Lg{FCP_6}$ & $\ax{P_sP_w}$, $\ax{IFCP_{\PERM}}$ \\
	& $\Lg{FCP_5}\subset\Lg{FCP_6}$ & $\ax{AFCP_{\OBL}}$, $\ax{AFCP_{\PERM}}$\\
	& & $\ax{AFCP2_{\PERM}}$, $\ax{IFCP2_{\PERM}}$\\
	& & $\ax{IFCP_{\OBL}}$, $\ax{IFCP2_{\OBL}}$\\
	\hline
	\end{tabular}
	\label{table:systems}
	\caption{Deontic Systems}
\end{table}
}

\subsection{Semantics and System Properties}\label{sec:semantics}

Let us begin with standard concepts. Assume that $\prop$ is
the set of atomic sentences.

\begin{definition}\label{def:neighbourhood-frame}
\sloppy A \emph{deontic neighbourhood frame} $\cF$ is a structure
$\tuple{W, \cNO , \cNP}$ where
  \begin{itemize}
  \item $W$ is a non-empty set of possible worlds;
  \item $\cNO$ and $\cNP$ are functions $W\mapsto 2^{2^W}$.
  \end{itemize}
\end{definition}

\begin{definition}\label{def:neighbourhood}
\sloppy  A \emph{deontic neighbourhood model} $\cM$ is a structure
  $\tuple{W, \cNO , \cNP , V}$ where $\tuple{W,
  \cNO , \cNP}$ is a deontic neighbourhood frame and
  $V$ is an evaluation function $\prop\mapsto 2^W$.
\end{definition}

\begin{definition}[Truth in a model]\label{def:neigh-truth}
 Let $\cM$ be a model $\tuple{W, \cNO , \cNP , V}$ and
 $w\in W$. The truth of any formula $p$ in $\cM$ is defined inductively
 as follows:
\begin{enumerate}
\item standard valuation conditions for the boolean connectives;
\item $\cM,w\models \OBL p$ iff $\tset{p}\in
  \cNO (w)$,
\item $\cM,w\models \sPERM p$ iff $\tset{p}\in
  \cNP (w)$,
\item $\cM,w\models \wPERM p$ iff $W- \tset{p}\not\in
  \cNO (w)$,
\end{enumerate}
where, as usual, $\tset{p}$ is the truth set of $p$ wrt to $\cM$:\footnote{Whenever clear from
    the context we drop the references to the model.}
\[
\tset{p} = \set{w\in W: \cM ,w\models p}.
\]
\end{definition}
A formula $p$ is \emph{true at a world} in a model
iff $\cM,w\models p$; \emph{true in a model} $\cM$, written $\cM\models
p$ iff for all worlds $w\in W$, $\cM,w\models p$; \emph{valid in a
  frame} $\cF$, written $\cF\models p$ iff it is true in all models
based on that frame; \emph{valid in a class $\cC$ of frames}, written $\cC\models p$, iff it is valid in all frames in the class. Analogously, an inference rule $P_1 , \dots P_n \Rightarrow C$ (where $P_1 , \dots P_n$ are the premises and $C$ the conclusion) is valid in a class $\cC$ of frames iff, for any $\cF\in\cC$, if $\cF\models P_1 , \dots , \cF\models P_n$ then $\cF\models C$\footnote{Of course, if any $P_k$ has the form $\vdash p$ then $\cF\models P_1$ trivially means $\cF\models p$.}.

We can now characterise different classes of deontic neighbourhood frames that are adequate of the deontic systems in Table \ref{table:systems}.
\begin{definition}[Frame Properties]\label{def:frame-properties}
Let $\cF = \tuple{W, \cNO , \cNP}$ be a deontic neighbourhood frame.
	\begin{itemize}
	\item \textbf{$\Box$-\emph{supplementation}}: $\cF$ is $\Box$-\emph{supplemented}, $\Box\in\set{\OBL, \PERM}$, iff for any $w\in W$ and $X,Y\subseteq W$, $X\cap Y\in \cN_{\Box} (w) \Rightarrow X\in \cN_{\Box} (w)\, \& \, Y\in \cN_{\Box} (w)$;
	\item \textbf{$\wPERM$-\emph{coherence}}: $\cF$ is $\wPERM$-\emph{coherent} iff for any $w\in W$ and $X\subseteq W$,
	$X\in \cNO (w) \Rightarrow W-X\not\in \cNO (w)$;
	\item \textbf{$\sPERM$-\emph{coherence}}: $\cF$ is $\sPERM$-\emph{coherent} iff for any $w\in W$ and $X\subseteq W$,
	$X\in \cNP (w) \Rightarrow W-X\not\in \cNO (w)$;
	\item \textbf{$\mathbf{AFCP_{\OBL}}$-\emph{permission}}: $\cF$ is $\mathbf{AFCP_{\OBL}}$-\emph{permitted} iff for any $w\in W$ and
	$X,Y\subseteq W$, $X\cup Y\in \cNP (w) \, \& \, W-Y \in \cNO(w)\Rightarrow X\in \cNP (w)$;
	\item \textbf{$\mathbf{AFCP_{\PERM}}$-\emph{permission}}: $\cF$ is $\mathbf{AFCP_{\PERM}}$-\emph{permitted} iff for any $w\in W$ and
	$X,Y\subseteq W$, $X\cup Y\in \cNP (w) \, \& \, W-X \not\in \cNO(w) \, \& \, W-Y \not\in \cNO(w) \Rightarrow X\in \cNP (w)\, \& \, X\in \cNP (w)$;
	\item \textbf{$\mathbf{AFCP2_{\PERM}}$-\emph{permission}}: $\cF$ is $\mathbf{AFCP2_{\PERM}}$-\emph{permitted} iff for any $w\in W$ and
	$X,Y\subseteq W$, $X\cup Y\in \cNP (w) \, \& \, W-X \not\in \cNO(w) \Rightarrow X\in \cNP (w)$;
	\item \textbf{$\mathbf{IFCP_{\OBL}}$-\emph{permission}}: $\cF$ is $\mathbf{IFCP_{\OBL}}$-\emph{permitted} iff for any $w\in W$ and
	$X,Y,Z\subseteq W$, $X\cup Y\in \cNP (w) \, \& \, Z\subseteq (W-Y)\, \& \, W-Z \in \cNO(w)\Rightarrow X\in \cNP (w)$;
	\item\sloppy \textbf{$\mathbf{IFCP_{\PERM}}$-\emph{permission}}: $\cF$ is $\mathbf{IFCP_{\PERM}}$-\emph{permitted} iff for any $w\in W$ and
	$X,Y,Z,Q\subseteq W$, $X\cup Y\in \cNP (w) \, \& \, Z\subseteq X \, \& \, Q\subseteq Y
	\, \& \, W-Z \not\in \cNO(w) \, \& \, W-Q \not\in \cNO(w) \Rightarrow X\in \cNP (w)\, \& \, Y\in \cNP (w)$;
	\item \textbf{$\mathbf{IFCP_{\PERM}}$-\emph{permission}}: $\cF$ is $\mathbf{IFCP2_{\PERM}}$-\emph{permitted} iff for any $w\in W$ and
	$X,Y,Z\subseteq W$, $X\cup Y\in \cNP (w) \, \& \, Z\subseteq X 	\, \& \, W-Z \not\in \cNO(w)   \Rightarrow X\in \cNP (w)$.
	\end{itemize}
\end{definition}

Here below are some relevant characterisation results. The proofs are in the Appendix.


\begin{restatable}{lemma}{characterisation}
\label{lemma:characterisation_results}
For any deontic neighbourhood frame $\cF$,
\begin{enumerate}

  \item\label{en:Ds} $\ax{D_s}$ is valid in the class of $\sPERM$-\emph{coherent} frames;
  \item\label{en:Dw} $\ax{D_w}$ is valid in the class of $\wPERM$-\emph{coherent} frames;
  \item\label{en:AOBL} $\ax{AFCP_{\OBL}}$ is valid in the class of $\mathbf{AFCP_{\OBL}}$-\emph{permitted} frames;
  \item\label{en:APERM} $\ax{AFCP_{\PERM}}$ is valid in the class of  $\mathbf{AFCP_{\PERM}}$-\emph{permitted} frames;
  \item\label{en:APERM2} $\ax{AFCP2_{\PERM}}$ is valid in the class of  $\mathbf{AFCP2_{\PERM}}$-\emph{permitted} frames;
  \item\label{en:IOBL} $\ax{IFCP_{\OBL}}$ is valid in the class of  $\mathbf{IFCP_{\OBL}}$-\emph{permitted} frames;
  \item\label{en:IPERM} $\ax{IFCP_{\PERM}}$ is valid in the class of $\mathbf{IFCP_{\PERM}}$-\emph{permitted} frames;

  \item\label{en:IPERM2} $\ax{IFCP2_{\PERM}}$ is valid in the class of $\mathbf{IFCP2_{\PERM}}$-\emph{permitted} frames.
\end{enumerate}
\end{restatable}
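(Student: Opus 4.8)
The plan is to prove all eight items by one uniform semantic recipe. For a given item, fix a frame $\cF=\tuple{W,\cNO,\cNP}$ in the stated class, a model $\cM$ based on $\cF$, and a world $w\in W$, and translate the clauses of Definition~\ref{def:neigh-truth} into statements about which truth sets lie in $\cNO(w)$ and $\cNP(w)$; the only propositional input needed is the pair of identities $\tset{\neg\phi}=W\setminus\tset{\phi}$ and $\tset{\phi\vee\psi}=\tset{\phi}\cup\tset{\psi}$, together with the observation that $\cM,w\models\wPERM\phi$ says exactly $W\setminus\tset{\phi}\notin\cNO(w)$. Each of the eight conditions of Definition~\ref{def:frame-properties} is by construction the set-theoretic rendering of its schema or rule, so in each case the argument reduces to one instantiation of that condition. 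I would begin with the two consistency schemata (items~\ref{en:Ds} and~\ref{en:Dw}), arguing by contradiction: if $\cM,w\models\OBL p\wedge\sPERM\neg p$ then $\tset{p}\in\cNO(w)$ and $\tset{\neg p}=W\setminus\tset{p}\in\cNP(w)$, so $\sPERM$-coherence applied to $X:=\tset{\neg p}$ gives $\tset{p}=W\setminus X\notin\cNO(w)$, contradicting $\tset{p}\in\cNO(w)$; hence the antecedent of $\ax{D_s}$ is never satisfied and the schema is valid. The case of $\ax{D_w}$ is the same argument with $\wPERM$-coherence and $\cNO$ in both roles.

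For the three genuine axioms among the guarded free-choice schemata, namely $\ax{AFCP_{\OBL}}$, $\ax{AFCP_{\PERM}}$ and $\ax{AFCP2_{\PERM}}$ (items~\ref{en:AOBL}, \ref{en:APERM}, \ref{en:APERM2}), I would assume the antecedent holds at $(\cM,w)$, read off the induced memberships, and instantiate the matching frame condition with the truth sets of the disjuncts and guards. For $\ax{AFCP_{\OBL}}$, take $X:=\tset{q}$ and $Y:=\tset{p}$: then $X\cup Y=\tset{p\vee q}\in\cNP(w)$ from $\sPERM(p\vee q)$, and $W\setminus Y=\tset{\neg p}\in\cNO(w)$ from $\OBL\neg p$, so the $\ax{AFCP_{\OBL}}$-permission condition yields $\tset{q}\in\cNP(w)$, i.e.\ $\sPERM q$. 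For $\ax{AFCP_{\PERM}}$, take $X:=\tset{p}$ and $Y:=\tset{q}$, so that $W\setminus X\notin\cNO(w)$ and $W\setminus Y\notin\cNO(w)$ are precisely $\wPERM p$ and $\wPERM q$, and the condition returns $\tset{p},\tset{q}\in\cNP(w)$; $\ax{AFCP2_{\PERM}}$ is the one-sided version of this.

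For the three inference rules (items~\ref{en:IOBL}, \ref{en:IPERM}, \ref{en:IPERM2}) I would instead establish \emph{rule} validity over the class: fix $\cF$ in the class, assume every premise is valid in $\cF$, take an arbitrary model $\cM$ on $\cF$ and world $w$, derive the conclusion at $(\cM,w)$, and infer that the rule's conclusion is valid in $\cF$ since $\cM,w$ were arbitrary. The one new ingredient is that a side premise of the form $\vdash r\to\neg p$ valid in $\cF$ means --- by the convention recalled just before the lemma --- that $r\to\neg p$ holds at every world of every model on $\cF$, equivalently that $\tset{r}\subseteq W\setminus\tset{p}$ in each such model; similarly $\vdash r\to p$ gives $\tset{r}\subseteq\tset{p}$ and $\vdash s\to q$ gives $\tset{s}\subseteq\tset{q}$. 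Taking $Z:=\tset{r}$ (and $Q:=\tset{s}$) for the auxiliary sets and $X,Y$ for the truth sets of the disjuncts, the defining condition of the class then applies: its remaining hypotheses about $Z$ and $Q$ are exactly $\tset{r}\in\cNO(w)$ coming from $\OBL r$, respectively $W\setminus\tset{r}\notin\cNO(w)$ and $W\setminus\tset{s}\notin\cNO(w)$ coming from $\wPERM r$ and $\wPERM s$, and the condition delivers $\sPERM q$, respectively $\sPERM p\wedge\sPERM q$, respectively $\sPERM p$, at $(\cM,w)$.

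I do not expect a genuine obstacle here --- every step is a direct unfolding of the semantics. The two points that need attention are: (i) keeping the complements straight, since $\wPERM$ is interpreted through $\cNO$ rather than $\cNP$, and an obligation such as $\OBL\neg p$ appears in the $\ax{AFCP_{\OBL}}$- and $\ax{IFCP_{\OBL}}$-permission conditions as $W\setminus Y\in\cNO(w)$ with $Y:=\tset{p}$ rather than as a statement about $\tset{\neg p}$ lying in $\cNP$; and (ii) reading a $\vdash$-premise of an inference rule as a truth-set inclusion holding across all models on the frame, not merely as a fact at the world under consideration --- this universality is exactly what legitimises the choices $Z:=\tset{r}$ and $Q:=\tset{s}$ in the three inference-rule cases.
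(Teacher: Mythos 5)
Your overall strategy is exactly the one the paper follows: unfold Definition~\ref{def:neigh-truth} into membership statements about $\cNO(w)$ and $\cNP(w)$ using $\tset{\neg\phi}=W-\tset{\phi}$ and $\tset{\phi\vee\psi}=\tset{\phi}\cup\tset{\psi}$, then apply one instance of the matching clause of Definition~\ref{def:frame-properties}. The paper merely phrases the axiom cases contrapositively and omits most details, so for items~\ref{en:Ds}--\ref{en:APERM2} and for the two $\wPERM$-guarded rules (items~\ref{en:IPERM} and~\ref{en:IPERM2}, where the clause $W-Z\not\in\cNO(w)$ really is the rendering of $\wPERM r$) your argument is correct and essentially identical to the intended one, including your reading of rule validity with the $\vdash$-premises interpreted as truth-set inclusions across all models on the frame.

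The one place where your recipe does not go through as written is item~\ref{en:IOBL}. You assert that, taking $Z:=\tset{r}$, the remaining hypothesis of the $\mathbf{IFCP_{\OBL}}$-permission clause about $Z$ is exactly $\tset{r}\in\cNO(w)$, i.e.\ $\OBL r$. But the clause as printed requires $W-Z\in\cNO(w)$, which with $Z=\tset{r}$ reads $\OBL\neg r$, not $\OBL r$. If you instead take $Z:=W-\tset{r}$ so that $W-Z=\tset{r}\in\cNO(w)$ is supplied by the premise $\OBL r$, then the side condition $Z\subseteq W-Y$ (with $Y=\tset{p}$) becomes $W-\tset{r}\subseteq W-\tset{p}$, i.e.\ $\vdash p\to r$, whereas the rule's premise $\vdash r\to\neg p$ only gives the opposite inclusion $\tset{r}\subseteq W-\tset{p}$. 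So your guiding assumption that each frame condition is "by construction the set-theoretic rendering of its rule" fails precisely here: the printed $\mathbf{IFCP_{\OBL}}$-permission condition characterises the variant rule with premise $\OBL\neg r$ (which is, in fact, how the appendix restates the rule in its sketch of this case), not $\ax{IFCP_{\OBL}}$ as stated in Section~\ref{sec:axioms} and as used in the proof of Corollary~\ref{cor:inclusion}. To make item~\ref{en:IOBL} go through you must either work with the $\OBL\neg r$ formulation of the rule or read the clause $W-Z\in\cNO(w)$ as $Z\in\cNO(w)$; under either repair your instantiation $X=\tset{q}$, $Y=\tset{p}$, $Z=\tset{r}$ works verbatim. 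This discrepancy originates in the paper, but your proof cannot simply declare the match "by construction" --- it has to be checked, and for this item it fails.
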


Completeness results for the three deontic systems are ensured: again see the Appendix for a proof.

\begin{restatable}{theorem}{completeness}
\label{th:completeness}
\
\begin{enumerate}[\indent (a)]
	\item the system $\Lg{E}$ is sound and complete w.r.t. the class of deontic neighbourhood frames;
	\item the system $\Lg{Min}$ is sound and complete w.r.t. the class of $\sPERM$- and $\wPERM$-\emph{coherent} frames;
	\item the system $\Lg{FCP_1}$ is sound and complete w.r.t. the class of $\mathbf{IFCP_{\OBL}}$- and $\mathbf{IFCP_{\PERM}}$-\emph{permitted} frames;
	\item the system $\Lg{FCP_2}$ is sound and complete w.r.t. the class of $\mathbf{AFCP_{\OBL}}$- and $\mathbf{AFCP_{\PERM}}$-\emph{permitted} frames;
	\item the system $\Lg{FCP_3}$ is sound and complete w.r.t. the class of $\PERM$-\emph{supplemented}, $\mathbf{AFCP_{\OBL}}$- and $\mathbf{AFCP_{\PERM}}$-\emph{permitted} frames;
		\item the system $\Lg{FCP_4}$ is sound and complete w.r.t. the class of $\mathbf{AFCP_{\OBL}}$- and $\mathbf{AFCP2_{\PERM}}$-\emph{permitted} frames;
		\item the system $\Lg{FCP_5}$ is sound and complete w.r.t. the class of $\mathbf{IFCP_{\OBL}}$- and $\mathbf{IFCP2_{\PERM}}$-\emph{permitted} frames;
		\item the system $\Lg{FCP_6}$ is sound and complete w.r.t. the class of $\PERM$-\emph{supplemented}, $\mathbf{IFCP_{\OBL}}$- and $\mathbf{IFCP2_{\PERM}}$-\emph{permitted} frames.
\end{enumerate}
\end{restatable}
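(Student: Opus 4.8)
The theorem is a package of soundness and completeness results for eight Hilbert systems, all parasitic on the neighbourhood semantics of Definitions~\ref{def:neighbourhood-frame}--\ref{def:neigh-truth}. I would prove it by the standard route for classical (non-normal) modal logics: a canonical-model (in fact minimal-canonical / ``proof-set'' neighbourhood model) construction, following Chellas~\cite{Chellas1980}. Soundness in each case is immediate from Lemma~\ref{lemma:characterisation_results}: the axiom schemata are valid on the stated frame classes, $\ax{RE}$ is valid on \emph{all} deontic neighbourhood frames (since truth of $\Box A$ depends only on the truth set $\tset{A}$), and validity is preserved under modus ponens and uniform substitution, so every theorem of each system is valid on the corresponding class. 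The characterisation lemma also handles the ``Derivable'' column implicitly, but for the soundness half we only need the explicitly listed axioms of each system.

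For completeness, the core is a single construction done once. Given a system $\Lg{L}$ among the eight, let $W^{c}$ be the set of maximal $\Lg{L}$-consistent sets of formulas, and for a formula $A$ write $|A| = \{w \in W^{c} : A \in w\}$ for its proof set. Define $\cNO(w) = \{\,|A| : \OBL A \in w\,\}$ and $\cNP(w) = \{\,|A| : \sPERM A \in w\,\}$, and $V(p) = |p|$. Because $\ax{RE}$ is in every system, these are well defined: if $|A| = |B|$ then $\vdash_{\Lg{L}} A \equiv B$, hence $\OBL A \in w \Leftrightarrow \OBL B \in w$ and likewise for $\sPERM$. The Truth Lemma, $\cM^{c}, w \models A \iff A \in w$, is then proved by induction on $A$; the only non-routine clauses are the modal ones, and these go through precisely because membership of $|A|$ in $\cNO(w)$ (resp.\ $\cNP(w)$) was \emph{defined} to match $\OBL A \in w$ (resp.\ $\sPERM A \in w$), using $\ax{RE}$ to move between formulas with the same proof set; the $\wPERM$ clause follows since $\wPERM A \equaldef \neg\OBL\neg A$ and $W^{c} - |A| = |\neg A|$. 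Consistency then yields completeness for $\Lg{E}$ (part (a)), since the canonical model is an unconstrained deontic neighbourhood model.

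The remaining parts are obtained by checking that, for each system, the canonical model's frame lies in the advertised class. This is a sequence of small verifications, each of the same shape: assume the defining frame condition fails at some $w$ for certain sets $X, Y, (Z, Q)$, pull these back to formulas via the proof-set representation, and derive a contradiction with maximal consistency using the relevant axiom or rule. For instance, for $\Lg{Min}$ one shows $\sPERM$- and $\wPERM$-coherence from $\ax{D_s}$ and $\ax{D_w}$: if $|A| \in \cNP(w)$ and $W^{c} - |A| = |\neg A| \in \cNO(w)$, then $\sPERM A, \OBL\neg A \in w$, contradicting $\ax{D_s}$. For $\Lg{FCP_2}$ one reads off $\mathbf{AFCP_{\OBL}}$- and $\mathbf{AFCP_{\PERM}}$-permission directly from the corresponding axioms (writing $X = |A|$, $Y = |B|$, noting $|A| \cup |B| = |A \vee B|$ and $W^{c} - |B| = |\neg B|$, so $\ax{AFCP_{\OBL}}$ gives $\sPERM A \in w$, i.e.\ $|A| \in \cNP(w)$). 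For the rule-based systems $\Lg{FCP_1}$ and $\Lg{FCP_5}$ one uses that the canonical model respects the inference rules: if $Z \subseteq X$ with $X = |A\vee B|{\restriction}$ actually $X \cup Y \in \cNP(w)$, $Z = |C|$, $W^{c}-Z = |\neg C| \notin \cNO(w)$, then since $Z \subseteq X$ we have $\vdash C \to A$ (as $|C| \subseteq |A|$ iff $\vdash C \to A$ over maximal consistent sets), so $\ax{IFCP2_{\PERM}}$ applies to give $\sPERM A \in w$. The $\ax{M}$/supplementation clauses for $\Lg{FCP_3}$ and $\Lg{FCP_6}$ are the classical Chellas argument ($\Box(A\wedge B) \to \Box A \wedge \Box B$ yields closure under supersets). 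In each case one must also verify the \emph{other} constraints of that system survive — they do, since each frame condition is an independent closure property on the neighbourhood functions and adding axioms only adds conditions.

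**Main obstacle.** The genuine subtlety is not any single verification but the bookkeeping around the inference-rule systems $\Lg{FCP_1}$, $\Lg{FCP_5}$, $\Lg{FCP_6}$: their frame conditions quantify over \emph{arbitrary} subsets $Z, Q \subseteq W^{c}$, not merely proof sets, whereas the rules only act on formulas. The fix is the standard observation that a condition like ``$Z \subseteq W - Y$ and $W - Z \in \cNO(w)$'' can be replayed with $Z$ replaced by the proof set of a formula: since $\cNO(w)$ consists only of proof sets, $W-Z \in \cNO(w)$ means $W - Z = |D|$ for some $D$ with $\OBL D \in w$, hence $Z = |\neg D|$; and $Z \subseteq W - Y = W - |B| = |\neg B|$ gives $\vdash \neg D \to \neg B$, equivalently $\vdash B \to D$... one must be slightly careful matching this to the exact form of $\ax{IFCP_{\OBL}}$ (which wants $\vdash r \to \neg p$), but it works out by taking $r = D$, $p = B$, using $W-Z \in \cNO(w)$ to get the $\OBL r$ premise and $Z \subseteq W-Y$ recast as $|D| = W - Z \supseteq Y = |B|$, i.e.\ $\vdash B \to D$; then $\vdash D \to \neg(\neg B)$... actually the clean statement is: the frame condition with arbitrary $Z$ is \emph{equivalent}, in a model all of whose neighbourhood members are proof sets, to its restriction to proof-set $Z$, and the latter is exactly what the rule delivers. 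Once this reduction-to-proof-sets lemma is isolated and stated once, every rule-based clause follows uniformly, and the theorem assembles by collecting parts (a)--(h).
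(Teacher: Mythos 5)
Your proposal follows the same route as the paper's own proof: soundness read off from Lemma~\ref{lemma:characterisation_results}, then a Lindenbaum plus canonical neighbourhood model construction with $\cN_{\Box}(w)$ consisting of the proof sets of the $\Box$-prefixed members of $w$, a Truth Lemma for $\Lg{E}$, and a case-by-case check that each frame property of Definition~\ref{def:frame-properties} is canonical. In outline the two arguments coincide.

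The one place where you go beyond the paper --- the ``main obstacle'' about frame conditions quantifying over arbitrary subsets of $W$ rather than over proof sets --- is a real issue, but your fix covers only half of it. Your reduction works for witnesses that are \emph{positively} constrained to determine a neighbourhood member (e.g.\ $Z$ in $\mathbf{IFCP_{\OBL}}$-permission, where $W-Z\in\cNO(w)$ forces $W-Z$, and hence $Z$, to be a proof set). It does not work for the sets $X,Y$ in, say, $\mathbf{AFCP_{\PERM}}$-permission: there the antecedent only requires $X\cup Y\in\cNP(w)$ together with the \emph{negative} conditions $W-X\not\in\cNO(w)$ and $W-Y\not\in\cNO(w)$, while the consequent demands $X\in\cNP(w)$. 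In the canonical model $\cNP(w)$ contains only proof sets, so a non-definable $X$ with $X\cup Y$ equal to some $\tset{p\vee q}\in\cNP(w)$ satisfies the antecedent (the negative conditions hold automatically for non-proof-sets) yet falsifies the consequent; the canonical frame, as literally defined, therefore need not lie in the advertised class. The paper's own Lemma on canonicity makes the same silent move, instantiating the frame properties only at truth sets, so this is a shared wrinkle rather than a divergence. A clean repair is either to relativise the offending quantifiers in Definition~\ref{def:frame-properties} to definable sets (which is all that the soundness direction in Lemma~\ref{lemma:characterisation_results} actually uses), or to pass from the minimal canonical model to a suitably closed variant; either should be stated explicitly rather than gestured at. Apart from this, your verifications for $\Lg{Min}$, $\Lg{FCP_2}$, $\Lg{FCP_4}$ and the $\ax{M}$/supplementation clauses match the paper's.
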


Finally, a corollary showing the relative strength of the six deontic systems.

\begin{restatable}{corollary}{inclusion}
\label{cor:inclusion}
	\
\begin{enumerate}[\indent (i)]
	\item\label{cor:1} $\Lg{FCP_2}\subset \Lg{FCP_1} \subset \Lg{FCP_3}\subset \Lg{FCP_6}$,\\
	    $\Lg{FCP_2}\subset \Lg{FCP_4} \subset \Lg{FCP_5} \subset \Lg{FCP_6}$ and\\
	    $\Lg{FCP_1}\subset \Lg{FCP_5}$.
	\item\label{cor:2} Let $\Lg{L_1},\Lg{L_2}\in\set{\Lg{FCP_i},1\leq i\leq 6}$, and let $\cC_1$ and $\cC_2$
	  be classes of frames adequate for $\Lg{L_1}$ and $\Lg{L_2}$. If $\Lg{L_1}\subset\Lg{L_2}$ then $\cC_2\subset\cC_1$.
\end{enumerate}
\end{restatable}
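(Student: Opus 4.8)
The plan is to split the corollary into its two clauses. Clause~(i) combines the ``Derivable'' column of Table~\ref{table:systems} (for the inclusions themselves) with Theorem~\ref{th:completeness} (for their strictness, via separating countermodels), and clause~(ii) is then a formal consequence of~(i) together with soundness and completeness.

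For the non-strict part of~(i) I would argue syntactically. Each system in a chain extends $\Lg{Min}$, and the ``Derivable'' column says exactly that each larger system proves every axiom and admits every rule of the immediately smaller one: $\Lg{FCP_1}$ derives $\ax{AFCP_{\OBL}}$ and $\ax{AFCP_{\PERM}}$, so $\Lg{FCP_2}\subseteq\Lg{FCP_1}$; $\Lg{FCP_3}=\Lg{FCP_2}\oplus\ax{M}$ admits $\ax{IFCP_{\OBL}}$ and $\ax{IFCP_{\PERM}}$, so $\Lg{FCP_1}\subseteq\Lg{FCP_3}$; $\ax{AFCP2_{\PERM}}$ implies $\ax{AFCP_{\PERM}}$, so $\Lg{FCP_2}\subseteq\Lg{FCP_4}$ and, adding $\ax{M}$ on both sides, $\Lg{FCP_3}\subseteq\Lg{FCP_6}$; and likewise $\Lg{FCP_4}\subseteq\Lg{FCP_5}\subseteq\Lg{FCP_6}$ and $\Lg{FCP_1}\subseteq\Lg{FCP_5}$, reading off the corresponding table entries. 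Transitivity of $\subseteq$ then yields the two displayed chains together with $\Lg{FCP_1}\subseteq\Lg{FCP_5}$, modulo strictness.

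For strictness it suffices, by transitivity, to separate the covering pairs (and the pair $\Lg{FCP_1},\Lg{FCP_5}$). In each case I would point to a consequence $\Gamma\vdash\varphi$ derivable in the larger system but not in the smaller, and certify the latter by a model built, via Theorem~\ref{th:completeness}, on a frame that lies in a class adequate for the smaller system and refutes $\Gamma\vdash\varphi$. For the pairs whose extra principle is $\ax{M}$ one uses an $\ax{AFCP}$/$\ax{IFCP}$-permitted, $\sPERM$- and $\wPERM$-coherent, but non-supplemented frame with a valuation falsifying $\Box(p\wedge q)\to(\Box p\wedge\Box q)$; this is just the familiar fact that $\ax{M}$ is not a theorem of a classical modal logic. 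For the pairs that replace an $\ax{AFCP}$-axiom by the stronger $\ax{IFCP}$-rule (or $\ax{AFCP_{\PERM}}$ by $\ax{AFCP2_{\PERM}}$) one uses a theory of the form isolated in Section~\ref{sec:FCP} --- such as $\set{\sPERM(p\vee q),\OBL(\neg p\wedge s)}$, on which $\ax{IFCP_{\OBL}}$ fires through $\vdash(\neg p\wedge s)\to\neg p$ while $\ax{AFCP_{\OBL}}$ is blocked in the absence of $\ax{RM}$, or $\set{\sPERM(p\vee q),\wPERM(p\wedge r)}$ --- refuted at a world of a frame satisfying only the weaker, axiom-level permission condition; choosing enough propositional atoms to realise the relevant subsets of $W$ as truth sets makes each refutation precise.

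For~(ii), fix $\Lg{L_1}\subset\Lg{L_2}$ among the $\Lg{FCP_i}$ with adequate classes $\cC_1,\cC_2$. Inclusion $\cC_2\subseteq\cC_1$: any $\cF\in\cC_2$ validates every theorem and admits every rule of $\Lg{L_2}$ by soundness, hence of $\Lg{L_1}$ since $\Lg{L_1}\subseteq\Lg{L_2}$; and each structural condition defining $\cC_1$ is, by the characterisation arguments behind Lemma~\ref{lemma:characterisation_results} read in both directions (a frame violating the condition carries a model refuting the associated schema or rule), equivalent to the validity of that schema or rule in $\cF$, so $\cF$ satisfies every defining condition of $\cC_1$, i.e.\ $\cF\in\cC_1$. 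Strictness: were $\cC_1=\cC_2$, then --- since a logic that is sound and complete for a class of frames coincides with the set of consequences valid over that class --- we would get $\Lg{L_1}=\Lg{L_2}$, contradicting the strict inclusion from~(i); hence $\cC_2\subsetneq\cC_1$. The delicate part is the strictness half of~(i): for each covering pair one must exhibit a single frame meeting all the closure conditions of an adequate class for the smaller system --- $\sPERM$- and $\wPERM$-coherence, the pertinent $\ax{AFCP}$/$\ax{IFCP}$-permission conditions, and $\PERM$-supplementation where applicable --- while failing precisely the condition matching the extra principle, and then verify both that a valuation on it refutes that principle and that the chosen $\Gamma\vdash\varphi$ is genuinely underivable in the smaller system; by contrast the inclusions in~(i) and the whole of~(ii) are routine once Theorem~\ref{th:completeness} is in hand.
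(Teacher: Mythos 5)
Your plan for part (i) is essentially the paper's own proof: the non-strict inclusions are obtained syntactically (axioms as instances of the rules via $\vdash p\to p$; $\ax{M}$ yielding $\ax{RM}$ and hence the admissibility of the $\ax{IFCP}$-rules; $\ax{AFCP2_{\PERM}}$ implying $\ax{AFCP_{\PERM}}$), and strictness is witnessed by countermodels on frames adequate for the smaller system, using exactly the separating instances you name ($\sPERM(p\vee q)\wedge\OBL(\neg p\wedge s)$, $\sPERM(p\vee q)\wedge\wPERM(p\wedge r)$, and a non-supplemented frame refuting $\ax{M}$). The paper exhibits the concrete four- and five-world models that your sketch defers, but your description of what they must satisfy is accurate, so this is a difference of detail rather than of method.

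The one step that does not go through as written is your justification of $\cC_2\subseteq\cC_1$ in part (ii). You appeal to reading the characterisation results ``in both directions,'' i.e.\ to the claim that a frame validating a schema or rule must satisfy the corresponding structural condition of Definition~\ref{def:frame-properties}. For neighbourhood frames this converse is not automatic: the conditions quantify over arbitrary subsets $X,Y\subseteq W$, whereas validity only constrains definable truth sets, and Lemma~\ref{lemma:characterisation_results} (and its proof) establishes only the soundness direction. So from ``$\cF\in\cC_2$ validates every principle of $\Lg{L_1}$'' you cannot directly conclude $\cF\in\cC_1$. The intended (and easy) repair is to check the set-theoretic implications between the defining conditions themselves --- e.g.\ every $\mathbf{IFCP_{\OBL}}$-permitted frame is $\mathbf{AFCP_{\OBL}}$-permitted by instantiating $Z:=W-Y$, and supplementation together with the $\ax{AFCP}$-conditions yields the $\ax{IFCP}$-conditions --- after which strictness of $\cC_2\subset\cC_1$ follows, as you say, from the separating frames of part (i). With that substitution your argument for (ii) matches what the paper's one-line proof implicitly relies on.
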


\section{Conclusions}\label{sec:conclusions}

In this paper we have investigated how, and if the notion of free choice permission is admissible in modal deontic logic. As is well known, several problems can be put forward in regard to this notion, the most fundamental of them being the so-called Permission Explosion Problem, according to which all systems containing \ref{eq:FCP} and closed under \ref{eq:RM} and \ref{eq:RM-P} license the derivation of any arbitrary permission whenever at least one specific permission is true. 

We argued (Section~\ref{sec:explosion}) that a plausible solution to this problem is to jump from monotonic into  classical deontic logics, i.e., systems closed under \ref{eq:RE} but not \ref{eq:RM}. This solution does not necessarily mean that the resulting deontic system is very weak, as far as permission is concerned, if further schemata and inference rules are added (Sections~\ref{sec:FCP} and \ref{sec:semantics}). 

The basic intuitions for extending classical deontic logics are the following: 
\begin{enumerate}
	\item We assume in background the distinction between norms and obligations/permissions. While we conceptually accept that the normative system may contain conflicting norms, it is logically inadmissible that such norms generate actual conflicting obligations/permissions since conflicts must be rationally solved, otherwise no obligation/permission can be obtained; hence, we validate schemata \ref{eq:Ds} and \ref{eq:Dw};
	\item Free choice permission is strong permission, meaning that it is a permission generated by explicit permissive norms;
	\item The possibility of detaching single strong permissions from disjunctive strong permissions, i.e., $\sPERM q$ from $\sPERM (p \vee q)$ strictly depends on the fact that $\OBL \neg p$ is not the case.
\end{enumerate}
Taking the above points into account, we thus proposed different guarded variants of \ref{eq:FCP} that significantly increase the inferential power of the logic. In particular, six Hilbert-style classical deontic systems were presented.

We observed that four of these systems are classical modal systems, while we can have other two acceptable systems which are monotonic. In fact, the fact that those two systems are closed under \ref{eq:RM} does not lead to full Permission Explosion, but only to a  ``controlled'' version of it: indeed, in systems like $\Lg{FCP3}$ any permission is obtainable via free choice permission \emph{only if} it is not incompatible with existing prohibitions.

Some directions for future work can be identified. In particular:
\begin{itemize}
	\item It is still an open issue to fully discuss the Resource Sensitivity Problem in our setting. In fact, while we argued that this problem goes beyond our paper, there are scenarios where our intuitions are relevant for this problem as well. For example, suppose that there is a fruit basket in the kitchen containing a banana and an apple. Bob and Alice are permitted to eat the banana or the apple and Alice first eats the former. Bob cannot do anything but take the apple. However, if Bob is allergic of apples, so no permission can be reasonably derived because it is forbidden for him to eat the apple.   
	\item Our idea of free choice permission relies on the fact that no strong permission can be detached from a disjunctive permissive expression if another norm allows for deriving a conflicting obligation. Hence a full understanding of schemata such as $\ax{AFCP_{\OBL}}$ or $\ax{AFCP_{\PERM}}$ may benefit for an explicit logical treatment of the logic of norms adopting defeasible reasoning \cite{Dong:2017}.
\end{itemize}
%


%
%
%

\subsection*{Acknowledgments}
This work was partially supported by the EU H2020 research and innovation programme under the Marie Sk{\l}odowska-Curie grant agreement No.\@ 690974 for the project MIREL: \emph{MIning and REasoning with Legal texts}.

\bibliographystyle{plainnat}
\bibliography{modal_logic,normative_mas}

\begin{thebibliography}{23}
\providecommand{\natexlab}[1]{#1}
\providecommand{\url}[1]{\texttt{#1}}
\expandafter\ifx\csname urlstyle\endcsname\relax
  \providecommand{\doi}[1]{doi: #1}\else
  \providecommand{\doi}{doi: \begingroup \urlstyle{rm}\Url}\fi

\bibitem[Alchourr\'{o}n and Bulygin(1971)]{alchourron71normative}
C.~E. Alchourr\'{o}n and E.~Bulygin.
\newblock \emph{Normative Systems}.
\newblock Springer Verlag, 1971.

\bibitem[Alchourr\'on and Bulygin(1981)]{AB:1981}
Carlos~E. Alchourr\'on and Eugenio Bulygin.
\newblock The expressive conception of norms.
\newblock In Risto Hilpinen, editor, \emph{New Studies in Deontic Logic}, pages
  95--125. D. Reidel, Dordrecht, 1981.

\bibitem[Alchourr\'on and Bulygin(1984)]{alchourron-bulygin:1984}
Carlos~E. Alchourr\'on and Eugenio Bulygin.
\newblock Permission and permissive norms.
\newblock In W.~Krawietz et~al., editor, \emph{Theorie der Normen}. Duncker \&
  Humblot, 1984.

\bibitem[Anglberger et~al.(2014)Anglberger, Dong, and Roy]{RoyDeon}
Albert J.~J. Anglberger, Huimin Dong, and Olivier Roy.
\newblock Open reading without free choice.
\newblock In Fabrizio Cariani, Davide Grossi, Joke Meheus, and Xavier Parent,
  editors, \emph{Deontic Logic and Normative Systems}, pages 19--32, Cham,
  2014. Springer International Publishing.

\bibitem[Anglberger et~al.(2015)Anglberger, Gratzl, and
  Roy]{anglberger_gratzl_roy_2015}
Albert~J.J. Anglberger, Nobert Gratzl, and Olivier Roy.
\newblock Obligation, free choice, and the logic of weakest permissions.
\newblock \emph{The Review of Symbolic Logic}, 8\penalty0 (4):\penalty0
  807–827, 2015.

\bibitem[Asher and Bonevac(2005)]{Asher2005}
Nicholas Asher and Daniel Bonevac.
\newblock Free choice permission is strong permission.
\newblock \emph{Synthese}, 145\penalty0 (3):\penalty0 303--323, 2005.

\bibitem[Barker(2010)]{barker:2010:permission}
Chris Barker.
\newblock Free choice permission as resource-sensitive reasoning.
\newblock \emph{Semantics and Pragmatics}, 3\penalty0 (10):\penalty0 1--38,
  2010.

\bibitem[Blackburn et~al.(2001)Blackburn, de~Rijke, and Venema]{Blackburn2001}
Patrick Blackburn, Maarten de~Rijke, and Yde Venema.
\newblock \emph{Modal Logic}.
\newblock Cambridge University Press, 2001.

\bibitem[Broersen(2004)]{Broersen:2004}
Jan Broersen.
\newblock Action negation and alternative reductions for dynamic deontic
  logics.
\newblock \emph{Journal of Applied Logic}, 2\penalty0 (1):\penalty0 153 -- 168,
  2004.

\bibitem[Chellas(1980)]{Chellas1980}
Brian~F. Chellas.
\newblock \emph{Modal logic: an introduction}.
\newblock Cambridge University Press, 1980.

\bibitem[Dong(2017)]{Dong:2017}
Huimin Dong.
\newblock \emph{Permission in Non-Monotonic Normative Reasoning}.
\newblock PhD thesis, University of Bayreuth, 2017.

\bibitem[Governatori and Rotolo(2006)]{ajl:ctd}
Guido Governatori and Antonino Rotolo.
\newblock Logic of violations: A {Gentzen} system for reasoning with
  contrary-to-duty obligations.
\newblock \emph{Australasian Journal of Logic}, 4:\penalty0 193--215, 2006.

\bibitem[Governatori et~al.(2016)Governatori, Olivieri, Calardo, and
  Rotolo]{deon:2016}
Guido Governatori, Francesco Olivieri, Erica Calardo, and Antonino Rotolo.
\newblock Sequence semantics for norms and obligations.
\newblock In \emph{Proceedings of {DEON} 2016}, London, 2016. College
  Publications.

\bibitem[Hansson(2013)]{Hansson:PERM}
Sven~O. Hansson.
\newblock The varieties of permissions.
\newblock In D.~Gabbay, J.~Horty, X.~Parent, R.~van~der Meyden, and L.~van~der
  Torre, editors, \emph{Handbook of Deontic Logic and Normative Systems}.
  College Publications, 2013.

\bibitem[Kamp(1973)]{Kamp:1973}
Hans Kamp.
\newblock Free choice permission.
\newblock \emph{Proceedings of the Aristotelian Society}, 74\penalty0
  (n/a):\penalty0 57--74, 1973.

\bibitem[Lewis(1979)]{Lewis:1979}
David Lewis.
\newblock \emph{A Problem About Permission}, pages 163--175.
\newblock Springer Netherlands, Dordrecht, 1979.

\bibitem[Lokhorst(1997)]{Lokhorst}
Gert-Jan~C. Lokhorst.
\newblock {Deontic Linear Logic with Petri Net Semantics}.
\newblock Technical report, FICT, Center for the Philosophy of Information and
  Communication Technology. Rotterdam, 1997.

\bibitem[Makinson(1999)]{Makinson:1999}
D.~Makinson.
\newblock On a fundamental problem of deontic logic.
\newblock In Paul McNamara and Henry Prakken, editors, \emph{Norms, Logics and
  Information Systems. New Studies in Deontic Logic and Computer Science},
  pages 29--54. IOS Press, Amsterdam, 1999.

\bibitem[Makinson and van~der Torre(2003)]{makinson-torre:2003}
David Makinson and Leendert van~der Torre.
\newblock Permission from an input/output perspective.
\newblock \emph{Journal of Philosophical Logic}, 32\penalty0 (4):\penalty0
  391--416, 2003.

\bibitem[McNamara(2018)]{sep-logic-deontic}
Paul McNamara.
\newblock Deontic logic.
\newblock In Edward~N. Zalta, editor, \emph{The Stanford Encyclopedia of
  Philosophy}. Metaphysics Research Lab, Stanford University, fall 2018
  edition, 2018.

\bibitem[Pacuit(2017)]{Pacuit:2017}
Eric Pacuit.
\newblock \emph{Neighborhood Semantics for Modal Logic}.
\newblock Springer, 2017.

\bibitem[von Wright(1963)]{vonwright:1963}
Georg~Henrik von Wright.
\newblock \emph{Norm and action: {A} logical inquiry}.
\newblock Routledge and Kegan Paul, 1963.

\bibitem[von Wright(1968)]{vonWright1968-VONAEI-3}
Georg~Henrik von Wright.
\newblock \emph{An Essay in Deontic Logic and the General Theory of Action with
  a Bibliography of Deontic and Imperative Logic}.
\newblock North-Holland Pub. Co, 1968.

\end{thebibliography}

\appendix
\section{Basic Properties of the Deontic Systems}

Let us start by proving Lemma \ref{lemma:characterisation_results}.

\characterisation*
\begin{proof}
The proof for case (\ref{en:Ds}) is straightforward. The proof of (\ref{en:Dw}) is trivial and standard. Both are omitted.
\todo{Allora, dire che una formula è valida in una classe di frames, significa dire che per ogni mondo, e per tutti i modelli la formula è vera nel mondo, $\forall\cM\in\cF, \forall w\in\cM,\cM,w\models A$. Quindi mi sembra che ci sia una sola direzione della dimostrazione, e ho commentato la direzione $\Leftarrow$}

\medskip

\noindent \emph{Case (\ref{en:AOBL})} -- Consider any frame $\cF$ that is $\mathbf{AFCP_{\OBL}}$-\emph{permitted} but such that $\cF \not\models \ax{AFCP_{\OBL}}$. This means that there exists a model $\cM = \tuple{W, \cNO , \cNP , V}$ based on $\cF$ such that $\cM \not\models \ax{AFCP_{\OBL}}$, i.e., there is a world $w\in W$ where
\begin{gather}
\cM , w \models \sPERM (p \vee q) \wedge \OBL \neg p \label{lemma:1AO}\\
\cM , w \not\models \sPERM q \label{lemma:2AO}
\end{gather}
By construction, from (\ref{lemma:2AO}) we have $\tset{q}\not\in \cNP (w)$, while from (\ref{lemma:1AO}) we have $\tset{p}\cup \tset{q}\in \cNP (w)$ and $W-\tset{p}\in \cNO (w)$, so $\cF$ is not $\mathbf{AFCP_{\OBL}}$-\emph{permitted}.

\medskip

\noindent \emph{Cases (\ref{en:APERM}) and (\ref{en:APERM2})} -- The proofs are similar to the one for Case (\ref{en:AOBL})	and are omitted.

\medskip

\noindent \emph{Case (\ref{en:IOBL})} --  As usual in these cases, the proof must show that, on the class of all $\mathbf{IFCP_{\OBL}}$-\emph{permitted} frames, for any model $\cM$ based on $\cF$  and for any world $w$ in it,
\[
\begin{array}{l}
\sPERM (p\vee q) \text{ and } \OBL\neg r \text{ are true in $\cM$ at }w \text{ and } r\to \neg q \text{ is valid in } \cF \Rightarrow  \\
\qquad \qquad \Rightarrow \sPERM p \text{ is true in $\cM$ at }w.
\end{array}
\]

\medskip

\noindent \emph{Cases (\ref{en:IPERM}) and (\ref{en:IPERM2})} -- The proofs are similar to the one for Case (\ref{en:IOBL}) and are omitted.
\end{proof}

The definitions of some basic notions and of canonical model for the classical bimodal logic $\Lg{E}$ (just consisting of $\ax{RE}$ for $\OBL$ and $\sPERM$) are standard.

In the rest of this section when we refer to a Deontic System $\Lg{S}$ we mean one the logic axiomatised in Section~\ref{sec:system}.

\begin{definition}[$\Lg{S}$-maximality]
A set $w$ is maximal iff it is $\Lg{S}$-consistent and for any formula $p$, either $p \in w$, or $\neg p \in w$.
\end{definition}

\begin{lemma}[Lindenbaum's Lemma]
For any Deontic System $\Lg{S}$, any consistent set $w$ of formulae can be extended to an $\Lg{S}$-maximal set $w^+$.
\end{lemma}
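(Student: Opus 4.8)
The plan is to prove Lindenbaum's Lemma in the standard way, by a Henkin-style enumeration construction, noting first that the statement really requires nothing beyond the fact that each $\Lg{S}$ is built by adding axiom schemata and/or finitary inference rules on top of $\ax{RE}$ and classical propositional logic, so that $\Lg{S}$-consistency behaves exactly as in ordinary modal logic. First I would fix a Deontic System $\Lg{S}$ from Section~\ref{sec:system} and a consistent set $w$. Enumerate all formulae of the language as $p_0, p_1, p_2, \dots$ (the language is countable since $\prop$ is countable and formulae are finite strings). Define a chain $w = w_0 \subseteq w_1 \subseteq w_2 \subseteq \cdots$ by setting $w_{n+1} = w_n \cup \set{p_n}$ if this is $\Lg{S}$-consistent, and $w_{n+1} = w_n \cup \set{\neg p_n}$ otherwise. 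Finally put $w^+ = \bigcup_{n\in\mathbb{N}} w_n$.

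The key steps are then: (1) show each $w_n$ is $\Lg{S}$-consistent, by induction on $n$ — the base case is the hypothesis, and for the inductive step one uses the elementary fact that if $w_n$ is consistent then at least one of $w_n \cup \set{p_n}$ and $w_n \cup \set{\neg p_n}$ is consistent, which follows from classical propositional reasoning inside $\Lg{S}$ (if both were inconsistent we could derive $\neg p_n$ and $p_n$ from $w_n$, hence $\bot$); (2) show $w^+$ is $\Lg{S}$-consistent: any $\Lg{S}$-derivation of $\bot$ from $w^+$ uses only finitely many premisses, all of which lie in some single $w_n$ by the chain property, contradicting (1); (3) show $w^+$ is maximal: for every formula $p$, $p = p_n$ for some $n$, and by construction either $p_n \in w_{n+1} \subseteq w^+$ or $\neg p_n \in w_{n+1} \subseteq w^+$. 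Together (2) and (3) say $w^+$ is an $\Lg{S}$-maximal set extending $w$, as required.

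I do not expect any genuine obstacle here; this is the classical Lindenbaum argument and the only thing worth being careful about is that $\Lg{S}$-consistency is defined relative to a logic whose consequence relation is finitary and includes $\Lg{CPL}$, which holds for all six systems (and for $\Lg{E}$ and $\Lg{Min}$) since they extend classical propositional logic by schemata and rules with finitely many premisses. The mildly delicate point, if any, is step~(1): one must phrase consistency in terms of $\Lg{S}$-derivability so that ``$w_n \cup \set{p_n}$ inconsistent'' yields $w_n \vdash_{\Lg{S}} \neg p_n$; this is just the deduction theorem for the propositional part, which is available because the extra modal rules ($\ax{RE}$, the $\ax{IFCP}$ rules) are rules of proof (applied to theorems $\vdash A \equiv B$, $\vdash r \to \neg p$, etc.) rather than rules that may be applied to arbitrary assumptions, so they do not interfere with propositional deduction from a set of hypotheses.
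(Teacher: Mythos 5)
Your proof is correct and is exactly the standard Lindenbaum construction that the paper invokes without giving any argument (the lemma is stated there as a known result with no proof). Your closing caveat is also the right one: with $\Lg{S}$-consequence defined in the usual Chellas/Pacuit way (a finite conjunction of hypotheses implying the conclusion as an $\Lg{S}$-theorem, so that $\ax{RE}$ and the $\ax{IFCP}$-style rules are used only in establishing theoremhood), both the finitariness used in step~(2) and the deduction-theorem step used in step~(1) hold for all of $\Lg{E}$, $\Lg{Min}$, and $\Lg{FCP_1}$--$\Lg{FCP_6}$.
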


\begin{definition}[Canonical Model \cite{Chellas1980,Pacuit:2017}]\label{def:canonical}
\sloppy A canonical neighbourhood model $\cM=\langle W, \cNO, \cNP, V \rangle$ for any  system $\Lg{S}$ in our language $\cL$ (where $\Lg{S} \supseteq \Lg{E}$) is defined as follows:

\begin{enumerate}
\item $W$ is the set of all the $\Lg{S}$-maximal sets.

\item 
For any propositional letter $p$, $\|p\|_{\cM}:= | p |_{\Lg{S}}$,
where $| p |_{\Lg{S}} := \{ w \in W \mid p \in w \}$.

%
%

\item
If $\Box\in \set{\OBL , \sPERM}$, let $\cN_{\Box} := \bigcup_{w \in W} \cN_{\Box} (w)$ where for each world $w$, $\cN_{\Box} (w):= \{ \| a_i \|_{\cM} \mid \Box a_i \in w \}$.


\end{enumerate}
\end{definition}

\begin{lemma}[Truth Lemma \cite{Chellas1980,Pacuit:2017}]\label{truth_lemma}
If $\cM=\langle W, \cNO, \cNP, V \rangle$ is canonical for $\Lg{E}$, then
for any $w\in W$ and for any formula $p$, $p \in w$ iff $\cM , w\models p$.
\end{lemma}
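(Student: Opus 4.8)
The plan is to prove the equivalence by induction on the structure of the formula $p$, which is the standard route for Truth Lemmas over neighbourhood canonical models (cf.\ \cite{Chellas1980,Pacuit:2017}); essentially all of the content sits in the modal clause, where the rule $\ax{RE}$ is the one nontrivial ingredient, and everything else is bookkeeping with maximal consistent sets. Throughout, write $|q|$ for the proof set $\set{v\in W : q\in v}$ of a formula $q$, so that by item~2 of Definition~\ref{def:canonical} we have $\|q\|_{\cM}=|q|$ for $q\in\prop$; the statement to be established is exactly that the truth set $\|q\|_{\cM}$ equals $|q|$ for \emph{every} formula $q$. For the base case $p\in\prop$ this is immediate: $p\in w$ iff $w\in|p|=\|p\|_{\cM}$ iff $\cM,w\models p$. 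For the boolean connectives one uses only that each $w\in W$ is $\Lg{E}$-maximal, hence contains every theorem, is closed under modus ponens, and behaves classically with respect to the connectives ($\neg r\in w$ iff $r\notin w$; $r\wedge s\in w$ iff $r\in w$ and $s\in w$; and likewise for $\vee$ and $\to$); feeding the induction hypothesis into these gives each boolean clause of the truth definition directly. Since $\wPERM$ is by convention the abbreviation $\neg\OBL\neg$, its clause needs nothing new beyond the clauses for negation and for $\OBL$.

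For the modal case fix $\Box\in\set{\OBL,\sPERM}$ and consider $\Box p$, with the induction hypothesis $\|p\|_{\cM}=|p|$ in force. If $\Box p\in w$, then by the definition of $\cN_{\Box}(w)$ in Definition~\ref{def:canonical} the proof set $|p|$ belongs to $\cN_{\Box}(w)$, hence $\|p\|_{\cM}\in\cN_{\Box}(w)$, i.e.\ $\cM,w\models\Box p$. For the converse, suppose $\cM,w\models\Box p$, that is $\|p\|_{\cM}\in\cN_{\Box}(w)$; then, by the definition of $\cN_{\Box}(w)$, there is a formula $a$ with $\Box a\in w$ and $|a|=\|p\|_{\cM}=|p|$, the final equality coming from the induction hypothesis.

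The step I would isolate as a small separate claim is that equality of proof sets implies provable equivalence: if $\not\vdash_{\Lg{E}}a\to p$, then $\set{a,\neg p}$ is $\Lg{E}$-consistent and, by Lindenbaum's Lemma, extends to a maximal set lying in $|a|$ but not in $|p|$, contradicting $|a|=|p|$; symmetrically $\vdash_{\Lg{E}}p\to a$, so $\vdash_{\Lg{E}}a\equiv p$. Then $\ax{RE}$ gives $\vdash_{\Lg{E}}\Box a\equiv\Box p$, and since $w$ contains $\Box a$, contains this theorem, and is closed under modus ponens, $\Box p\in w$; this closes the induction. I expect the converse modal direction to be the only genuine obstacle: one must use the semantic hypothesis $\|p\|_{\cM}\in\cN_{\Box}(w)$ purely through the description of $\cN_{\Box}(w)$ as the collection of proof sets of $\Box$-formulas in $w$, and then upgrade the resulting equality of two proof sets into a theorem $a\equiv p$ so that $\ax{RE}$ can be applied — all the remaining cases being routine.
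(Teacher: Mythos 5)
Your proof is correct and is exactly the standard induction-on-formulas argument that the paper appeals to by citing \cite{Chellas1980,Pacuit:2017} without spelling it out: the only nontrivial step is the converse modal direction, which you rightly settle by upgrading equality of proof sets to a provable equivalence via Lindenbaum's Lemma and then applying $\ax{RE}$. Your reading of Definition~\ref{def:canonical}'s clause for $\cN_{\Box}(w)$ as the collection of \emph{proof sets} of boxed formulas in $w$ is the right one (and the one needed for the induction to close), even though the paper's notation there literally writes truth sets.
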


Thus, we have as usual basic completeness result for $\Lg{E}$. To cover the other systems, it is enough to prove that all frame properties for the relevant schemata and rules are canonical.

\begin{lemma}\label{lemma:canonical_properties}
The frame properties of Definition \ref{def:frame-properties} are canonical.
\end{lemma}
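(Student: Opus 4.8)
The plan is to verify, for each frame property introduced in Definition~\ref{def:frame-properties}, that it is preserved in the canonical model $\cM = \langle W, \cNO, \cNP, V\rangle$ for the relevant system $\Lg{S}$. Recall the canonical construction sets $\cN_\Box(w) = \{\, \|a\|_\cM \mid \Box a \in w\,\}$, and by the Truth Lemma (Lemma~\ref{truth_lemma}, which extends to every $\Lg{S}\supseteq\Lg{E}$ once the Truth Lemma is known for $\Lg{E}$ and the modal clauses are unchanged) we have $\|a\|_\cM = |a|_{\Lg{S}}$ for every formula, so $X \in \cN_\Box(w)$ iff $X = \|a\|_\cM$ for some $a$ with $\Box a \in w$. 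The strategy in each case is the same: assume the left-hand side of the frame condition holds in $\cM$ at some world $w$ for sets $X, Y, \dots$; use the canonical description to name these sets by formulas $a, b, \dots$ with the corresponding modal formulas in $w$; invoke the relevant axiom or inference rule of $\Lg{S}$ together with maximality (and closure under \ref{eq:RE}, so that the choice of formula naming a given truth set is immaterial) to deduce the modal formula witnessing the right-hand side lies in $w$; then translate back to the set-theoretic conclusion.

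**I would carry this out property by property.** For $\wPERM$- and $\sPERM$-\emph{coherence}: if $X \in \cNO(w)$ and $W - X \in \cNO(w)$, pick $a$ with $\|a\| = X$ and $\OBL a \in w$; since $W - X = \|\neg a\|$ by the Truth Lemma, $W - X \in \cNO(w)$ gives some $b$ with $\|b\| = \|\neg a\|$ and $\OBL b \in w$, whence $b \equiv \neg a$ is a theorem, so $\OBL \neg a \in w$ by \ref{eq:RE} and maximality, contradicting $\OBL a \wedge \OBL\neg a \to \bot$ (axiom $\ax{D_w}$); the $\sPERM$ case is identical using $\ax{D_s}$. For the $\mathbf{AFCP}$ and $\mathbf{IFCP}$ permission properties: given, say, $X \cup Y \in \cNP(w)$ and $W - Y \in \cNO(w)$ for $\mathbf{AFCP_{\OBL}}$, choose $a,b$ with $\|a\| = X$, $\|b\| = Y$; then $\|a \vee b\| = X \cup Y$ and $\|\neg b\| = W - Y$, so by the canonical description (and \ref{eq:RE} to rename) $\sPERM(a\vee b) \in w$ and $\OBL\neg b \in w$; the axiom $\ax{AFCP_{\OBL}}$ and maximality give $\sPERM a \in w$, i.e.\ $X = \|a\| \in \cNP(w)$. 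The remaining axiom cases ($\ax{AFCP_{\PERM}}$, $\ax{AFCP2_{\PERM}}$) are analogous. For the inference-rule properties $\mathbf{IFCP_{\OBL}}$, $\mathbf{IFCP_{\PERM}}$, $\mathbf{IFCP2_{\PERM}}$, the extra subset quantifiers $Z \subseteq X$ etc.\ are handled by noting that $Z \subseteq X$ means $\|c\| \subseteq \|a\|$, i.e.\ $c \to a$ is a theorem of $\Lg{S}$ (this is where soundness/completeness of the propositional base and the Truth Lemma combine), which is exactly the side condition $\vdash c \to a$ required to fire the rule; applying the rule inside the maximal set $w$ then delivers the conclusion. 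Finally, $\PERM$-\emph{supplementation} follows from schema $\ax{M}$ in the systems $\Lg{FCP_3}, \Lg{FCP_6}$: if $\|a \wedge b\| = X \cap Y \in \cNP(w)$ then $\sPERM(a\wedge b) \in w$, and $\ax{M}$ gives $\sPERM a \wedge \sPERM b \in w$, so $X, Y \in \cNP(w)$.

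**The main obstacle** — really a bookkeeping subtlety rather than a deep difficulty — is the interaction between the side conditions of the inference rules ($\vdash r \to \neg p$ and the like) and the canonical model: one must be careful that ``$\vdash r \to \neg p$ in $\Lg{S}$'' is genuinely equivalent to the semantic inclusion $\|r\| \subseteq \|\neg p\|$ \emph{in the canonical model}, which is legitimate because every $\Lg{S}$ extends classical propositional logic and the Truth Lemma pins down truth sets; and that, having named a truth set $X$ by some formula $a$, the argument does not secretly depend on \emph{which} such $a$ we chose — this is precisely why \ref{eq:RE} must be present (it is, in every system in Table~\ref{table:systems}), guaranteeing $\Box a \in w \iff \Box a' \in w$ whenever $\|a\| = \|a'\|$. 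Once these points are made once and for all, each clause is a short, routine verification, and assembling them with the Truth Lemma for $\Lg{E}$ yields soundness and completeness for all of $\Lg{E}, \Lg{Min}, \Lg{FCP_1}, \dots, \Lg{FCP_6}$ with respect to their adequate frame classes, establishing Theorem~\ref{th:completeness}; Corollary~\ref{cor:inclusion} then follows from the syntactic inclusions in Table~\ref{table:systems} together with the contravariant correspondence between adding axioms and restricting frame classes, the strictness of the inclusions being witnessed by the separating frames/models whose existence is guaranteed by the characterisation results of Lemma~\ref{lemma:characterisation_results}.
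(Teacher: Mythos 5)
Your proposal is correct and follows essentially the same route as the paper: verify each frame condition in the canonical model by naming the relevant neighbourhoods with formulas via the Truth Lemma, then fire the corresponding axiom or inference rule inside the maximal consistent set to land the required set back in $\cN_{\Box}(w)$. You are in fact somewhat more explicit than the paper on two points it leaves implicit, namely the role of \ref{eq:RE} in making the choice of naming formula immaterial, and the equivalence of the rules' side conditions $\vdash r \to p$ with truth-set inclusion in the canonical model.
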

\begin{proof}
The proofs for \textbf{$\Box$-\emph{supplementation}}, \textbf{$\wPERM$-\emph{coherence}}, and \textbf{$\sPERM$-\emph{coherence}} are standard.\\
\\*
\textbf{$\mathbf{AFCP_{\PERM}}$-\emph{permission}} -- Let us consider a canonical model $\cM$ for $\ax{AFCP_{\PERM}}$, any world $w$ in it, and any truth sets such that $\tset{p}\cup \tset{q}\in \cNP (w)$ and $W-\tset{q}\in \cNO (w)$. Clearly, $\tset{p\vee q}\in \cNP (w)$. Since $\ax{AFCP_{\PERM}}$ is valid (Lemma \ref{lemma:characterisation_results}), then $\sPERM p\in w$. By construction, this means that $\tset{p}\in \cNP (w)$, thus the model is $\mathbf{AFCP_{\PERM}}$-permitted.\\
\\*
\textbf{$\mathbf{AFCP_{\OBL}}$-\emph{permission}} and  \textbf{$\mathbf{AFCP2_{\PERM}}$-\emph{permission}}-- Similar to the case above.\\
\\*
\textbf{$\mathbf{IFCP_{\PERM}}$-\emph{permission}} -- Let us consider a canonical model $\cM$ for $\ax{IFCP_{\PERM}}$, any world $w$ in it, and any truth sets such that $\tset{p}\cup \tset{q}\in \cNP (w)$, $W-\tset{r}\not\in \cNO (w)$, and $W-\tset{s}\not\in \cNO (w)$. Clearly, $\tset{p\vee q}\in \cNP (w)$. Also, assume $\tset{r} \subseteq \tset{p}$ and $\tset{r}\subseteq \tset{q}$. Since $\ax{IFCP_{\PERM}}$ is valid (Lemma \ref{lemma:characterisation_results}) then $\sPERM p\wedge \sPERM q\in w$. By construction, this means that $\tset{p}\in \cNP (w)$ and $\tset{q}\in \cNP (w)$, thus the model is $\mathbf{IFCP_{\PERM}}$-permitted.\\
\\*
\textbf{$\mathbf{IFCP_{\OBL}}$-\emph{permission}} and  \textbf{$\mathbf{IFCP2_{\PERM}}$-\emph{permission}}-- Similar to the case above.
\end{proof}

Hence, the following result is ensured.

\completeness*

Finally, let us prove Corollary \ref{cor:inclusion}.

\inclusion*
\begin{proof}
\emph{Case \eqref{cor:1}} --
For $\Lg{FCP_2}\subseteq\Lg{FCP_1}$ we first notice that for every formula $p$, $\vdash p\to p$; hence,
axioms $\ax{AFCP_\OBL}$ and $\ax{AFCP_\PERM}$ can be considered as simple instances of $\ax{IFCP_\OBL}$ and $\ax{ICFP_\PERM}$ respectively.

To show that the inclusion is strict the model below provides an $\ax{AFCP_\OBL}$-permitted model
that does not validate $\ax{IFCP_\OBL}$.

Let $\cM=\langle W,\cNO,\cNP,V\rangle$, where:
\begin{itemize}
  \item $W=\set{w_1, w_2, w_3, w_4,w_5}$;
  \item $V(a)=\set{w_1,w_4,w_5}$,
        $V(b)=\set{w_2,w_3,w_4}$ and
        $V(c)=\set{w_1,w_2}$;
  \item $\cN_{\OBL}(w_1)=\set{\set{w_4}}$; and
  \item $\cN_{\PERM}(w_1)=\set{\set{w_1,w_2,w_3}}$.
\end{itemize}
It is easy to verify that the model is $\ax{AFCP_\OBL}$-permitted, $\sPERM(\neg a\vee c)$ and $\OBL(a\wedge c)$ are true in $w_1$: $\tset{\neg a\vee c}=\set{w_1,w_2,w_3}\in\cNP(w_1)$ and $\tset{a\wedge c}=\set{w_4}\in\cNO(w_1)$, and clearly $a\to \neg\neg a$. However, $\tset{c}=\set{w_1,w_2}\notin\cNP(w_1)$.

\medskip

For $\Lg{FCP_1}\subseteq\Lg{FCP_3}$ it is enough to prove the following showing that $\ax{IFCP_\OBL}$
and $\ax{IFCP_\PERM}$ are derivable from $\ax{AFCP_\OBL}$ and $\ax{AFCP_\PERM}$ and $\ax{RM}$. ($\ax{RM}$ is
valid in every classical modal logic containing $\ax{M}$ \cite{Chellas1980}.)
\[
\begin{array}{lll}
  1. & \sPERM(p\vee q)\wedge \OBL r & Hyp.\\
  2. & r \to \neg p                 & Hyp.\\
  3. & \sPERM(p\vee q)              & 1,\ \CPL\\
  4. & \OBL r                       & 1,\ \CPL\\
  5. & \OBL r \to \OBL\neg p        & 2,\ \ax{RM}\\
  6. & \OBL\neg p                   & 4,5,\ \CPL\\
  7. & (\sPERM(p\vee q)\wedge \OBL\neg p) \to \sPERM q & \ax{AFCP_\OBL}\\
  8. & \sPERM q                     & 3,6,7,\ \CPL
\end{array}
\]

\[
\begin{array}{rll}
  1. & \sPERM(p\vee q)\wedge \wPERM r \wedge \wPERM s & Hyp.\\
  2. & r \to p        & Hyp.\\
  3. & s \to q        & Hyp.\\
  4. & \wPERM r \to \wPERM p    & 2,\ \ax{RM}\\
  5. & \wPERM s \to \wPERM q    & 3,\ \ax{RM}\\
  6. & \wPERM r        & 1,\ \CPL\\
  7. & \wPERM s        & 1,\ \CPL\\
  8. & \sPERM(p\vee q) & 1,\ \CPL\\
  9. & \wPERM p        & 4,6,\ \CPL\\
 10. & \wPERM q        & 5,7,\ \CPL\\
 11. & (\sPERM(p\vee q)\wedge \wPERM r \wedge \wPERM s) \to (\sPERM p \wedge \sPERM q) & \ax{AFCL_\PERM}\\
 12. & \sPERM p \wedge \sPERM q  & 8,9,10,\ \CPL
\end{array}
\]

If we take the model used in the previous case and we add $\set{w_1,w_2}$ to $\cNP(w_1)$, obtaining $\cNP(w_1)=\set{\set{w_1,w_2,w_3}, \set{w_1,w_2}}$, then we have a non $\OBL$-supplemented $\ax{IFCP_\OBL}$-permitted model falsifying $\ax{M}$.  Now the model is $\ax{IFCP_\OBL}$-permitted, but not $\OBL$-supplemented: $\tset{a\wedge c}=\set{w_4}\in\cNO(w_1)$, $\set{w_4}=\tset{a}\cap\tset{c}$, but $\tset{a},\tset{c}\notin\cNO(w_1)$, falsifying the following instance of $\ax{M}$: $\OBL(a\wedge c)\to\OBL a\wedge \OBL c$.

\medskip

For $\Lg{FCP_2}\subset\Lg{FCP_4}$, and $\Lg{FCP_1}\subset\Lg{FCP_5}$ it is immediate to verify that $\ax{AFCP2}_{\PERM}$ implies $\ax{AFCP}_{\PERM}$ in $\Lg{CPL}$ but not the other way around; similarly for $\ax{IFCP2}_{\PERM}$ and $\ax{IFCP}_{\PERM}$.

\medskip

The second derivation given in the case of $\Lg{FCP_1}\subset\Lg{FCP_3}$  can be trivially adjusted to show that $\Lg{FCP_5}\subseteq\Lg{FCP_6}$, for the strictness of the inclusion we can reuse the model given in the same case.

\medskip

For $\Lg{FCP_4}\subseteq\Lg{FCP_5}$, we can reuse the derivation that shows that $\ax{IFCP}_{\PERM}$ is a derived rule in $\Lg{FCP_3}$, using $\sPERM(p\vee q)\wedge\wPERM r$ in step 1. and axiom $\ax{AFCP2}_{\PERM}$ in step 11.

To show that the inclusion is strict consider the model $\cM=\langle W, \cNO, \cNP, V\rangle$, where
\begin{itemize}
  \item $W=\set{w_1,w_2,w_3,w_4}$,
  \item $\cNO(w_1)=\set{\set{w_1}}$,
  \item $\cNP(w_1)=\set{\set{w_1,w_2,w_3},\set{w_1,w_2}}$,
  \item $V(a)=\set{w_1,w_2}$, $V(b)=\set{w_1,w_4}$, $V(c)=\set{w_2,w_3}$.
\end{itemize}
It is immediate to verify that the model is $\ax{IFCP2_\PERM}$-permitted but not $\ax{AFCP2_\PERM}$-permitted; indeed, $\cM,w_1\models \sPERM(a \vee c) \wedge \wPERM(a\wedge b)$ but $\cM,w_1\not\models \sPERM a$,

\medskip

For $\Lg{FCP_3}\subset\Lg{FCP_6}$ the result follows immediately from the fact that $\Lg{FCP_2}\subset\Lg{FCP_4}$.

\medskip

\noindent \emph{Case (\ref{cor:2})} -- The result follows from Case \eqref{cor:1} above and Theorem \ref{th:completeness}.
\end{proof}

\end{document}